\documentclass[a4paper,11pt,reqno]{amsart}
\usepackage[body={16cm,24cm}]{geometry}
\usepackage[T1]{fontenc}
\usepackage{amssymb,amsmath}
\usepackage{mathtools}
\usepackage{enumerate}
\usepackage{graphicx}
\numberwithin{equation}{section}
\theoremstyle{plain}
\newtheorem{thm}{Theorem}[section]
\newtheorem{prop}[thm]{Proposition}
\newtheorem{cor}[thm]{Corollary}
\theoremstyle{definition}
\newtheorem*{rh-pb*}{RH problem}
\theoremstyle{remark}
\newtheorem{rem}[thm]{Remark}
\providecommand{\BS}[1]{\boldsymbol{#1}}
\providecommand{\D}[1]{\mathbb{#1}}
\newcommand{\dd}{\mathrm{d}}
\newcommand{\eul}{\mathrm{e}}
\newcommand{\ii}{\mathrm{i}}
\newcommand{\sign}{\mathrm{sign}}
\providecommand{\abs}[1]{\lvert#1\rvert}
\providecommand{\croch}[1]{\lbrack#1\rbrack}
\renewcommand{\Im}{\operatorname{Im}}
\newcommand{\ord}{\mathrm{O}}
\newcommand{\osmall}{\mathrm{o}}
\renewcommand{\Re}{\operatorname{Re}}
\DeclareMathOperator{\Res}{Res}
\begin{document}
\title[RHP for SP equation]{The short pulse equation\\by
a Riemann--Hilbert approach}
\author[A.~Boutet de Monvel]{Anne Boutet de Monvel$^{\ast}$}
\address{$^{\ast}$%
Institut de Math\'ematiques de Jussieu-PRG,
Universit\'e Paris Diderot,
Avenue de France,
B\^at.~Sophie Germain,
case 7012,
75205 Paris Cedex 13,
France}
\email{anne.boutet-de-monvel@imj-prg.fr}
\author[D.~Shepelsky]{Dmitry Shepelsky$^{\dagger}$}
\address{$^{\dagger}$%
Mathematical Division,
Institute for Low Temperature Physics,
47 Lenin Avenue,
61103 Kharkiv,
Ukraine}
\email{shepelsky@yahoo.com}
\author[L.~Zielinski]{Lech Zielinski$^{\ddagger}$}
\address{$^{\ddagger}$%
LMPA, Universit\'e du Littoral C\^ote d'Opale,
50 rue F. Buisson, 
CS 80699,
62228 Calais,
France}
\email{Lech.Zielinski@lmpa.univ-littoral.fr}
\subjclass[2010]{Primary: 35Q53; Secondary: 37K15, 35Q15, 35B40, 35Q51, 37K40}
\keywords{Short pulse equation, short wave equation, Camassa--Holm type equation, inverse scattering transform, Riemann--Hilbert problem}
\date{\today}
\begin{abstract}
We develop a Riemann--Hilbert approach to the inverse scattering transform method for the short pulse (SP) equation 
\[
u_{xt}=u+\tfrac{1}{6}(u^3)_{xx}
\]
with zero boundary conditions (as $\abs{x}\to\infty$). This approach is directly applied to a Lax pair for the SP equation. It allows us to give a parametric representation of the solution to the Cauchy problem. This representation is then used for studying the long-time behavior of the solution as well as for retrieving the soliton solutions. Finally, the analysis of the long-time behavior allows us to formulate, in spectral terms, a sufficient condition for the wave breaking.
\end{abstract}
\maketitle
\section{Introduction}                   \label{sec:intro}

It is well-known that for describing the slow modulation of the amplitude of a weakly nonlinear wave packet in a moving medium, the nonlinear Schr\"odinger (NLS) equation is one of the universal nonlinear integrable models. It has been used with great success in nonlinear optics to describe the propagation of sufficiently broad pulses, or slowly varying wave trains whose spectra are narrowly localized around the carrier frequency. However, high-speed fiber-optic communication demands ultra-short pulses. With this respect, certain technological progress for creating them has been achieved; but it is important that in these conditions, the description of the evolution of these pulses lies beyond the usual approximations leading to the NLS equation. 

The short pulse (SP) equation 
\begin{equation}\label{spe}
u_{xt}=u+\frac{1}{6}(u^3)_{xx}
\end{equation}
was proposed by Sch\"afer and Wayne \cite{sw04,sw05} as an alternative (to the NLS equation) model for approximating the evolution of ultra-short intense infrared pulses in silica optics. It was shown in \cite{sw05} by numerical simulations that the SP equation can be successfully used for describing pulses with broad spectrum.

The short pulse equation is formally integrable: it is the compatibility condition for a pair of linear equations (the Lax pair), see \cite{ss05}:
\begin{subequations}\label{Lax}
\begin{align}\label{Lax-u}
&\Phi_x=U\Phi \\
\label{Lax-v}
&\Phi_t=V\Phi,
\end{align}
\end{subequations} 
where $U$ and $V$ are $2\times 2$ matrices dependent on the spectral parameter $\lambda$: 
\begin{subequations} \label{Lax-eqs}
\begin{align}\label{Lax-eqs-u}
&U=\begin{pmatrix}
\lambda \quad & \lambda u_x\\
\lambda u_x \quad& -\lambda \end{pmatrix},
\\
\label{Lax-eqs-v}
&V=
\begin{pmatrix} 
\frac{\lambda}{2} u^2+\frac{1}{4\lambda}&\frac{\lambda}{2} 
u^2 u_x-\frac12 u\\[5mm]
\frac{\lambda}{2} u^2 u_x+\frac{1}{2}u&-\frac{\lambda}{2} u^2-\frac{1}{4\lambda}
\end{pmatrix}.
\end{align}
\end{subequations}
 
The variants of application of the inverse scattering transform (IST) method to the SP equation, known in the literature, rely on establishing end exploiting the relationship between the SP equation and other integrable equations (like the sine-Gordon equation, see \cite{ss06}). But such relations turn out to be rather complicated and implicit, which, in particular, makes it difficult to apply them for studying initial value problems with general initial data. 

In the present paper we propose a direct approach to the problem of integration of the SP equation, which is based on applying the inverse scattering transform method, in the form of an associated Riemann--Hilbert (RH) problem. This means that the construction of this RH problem is made in terms of dedicated solutions of the Lax pair equations associated directly to the SP equation.

It is interesting to notice that the short pulse equation can be viewed as the short wave approximation to another integrable equation
\begin{equation}
m_t+\left((u^2-u_x^2)m\right)_x+u_x=0, \qquad m\coloneqq u-u_{xx},
\label{mCH}
\end{equation}
usually referred to as the ``modified Camassa--Holm equation'', and also known as the ``Fuchssteiner--Olver--Rosenau--Qiao'' equation
\cite{F96,FF81,OR96,Q06}. Indeed, introducing the new variables 
\[
x'=\frac{x}{\varepsilon},\quad t'=t\varepsilon,\quad u'=\frac{u}{\varepsilon^2},
\]
passing to the limit $\varepsilon\to 0$ and retaining the main terms reduce \eqref{mCH} to \eqref{spe}. With this respect we notice that the same short-wave limit applied to the Camassa--Holm (CH) equation
\begin{equation}    \label{CH-om}
u_t-u_{txx}+2 u_x+3uu_x=2u_xu_{xx}+uu_{xxx}
\end{equation}
leads to the so-called short wave (SW) equation
\begin{equation}  \label{mHS-om}
u_{txx}-2 u_x+2u_xu_{xx}+uu_{xxx}=0.
\end{equation}
The RH approach to the inverse scattering method for the CH equation and the SW equation were presented in \cite{BS06,BS08} and in \cite{BSZ11}, respectively. In what follows we will see that the development of this approach for the SP equation, on one hand, shares many common features with that for the SW equation, but on the other hand, has important differences.

Various aspects of the SP equation have been addressed in the literature, including the construction of solitary wave solutions \cite{G15,M07,ss06} and periodic solutions \cite{M08}. Well-posedness of the Cauchy problem has been studied in \cite{CR15,PS10,sw04}. Certain sufficient conditions for wave breaking have been found in \cite{LPS09}. An integrable hierarchy of equations associated with the SP equation is discussed in \cite{B05}. The bi-Hamiltonian structure of the SP equation is presented in \cite{B06}.

In this paper we present a RH problem formalism for the inverse scattering approach to the initial value problem for the SP equation: 
\begin{subequations}    \label{spe-ivp}
\begin{alignat}{3}           \label{spe-ic}
&u_{xt}=u+\frac16 (u^3)_{xx},&\quad&t>0,&\quad&-\infty<x<+\infty,\\
&u(x,0)=u_0(x),&&&&-\infty<x<+\infty \label{IC}.
\end{alignat}
\end{subequations}
We assume that $u_0(x)$ decays to $0$ sufficiently fast:
\[
u_0(x)\to 0,\qquad x\to\pm\infty,
\]
and we seek a solution $u(x,t)$ decaying to $0$ for all $t>0$:
\[
u(x,t)\to 0,\qquad x\to\pm\infty.
\]

In Section \ref{sec:2} we present appropriate Lax pairs associated with the SP equation, whose dedicated solutions are used in Section \ref{sec:3} for formulating a matrix Riemann--Hilbert problem suitable for solving the Cauchy problem \eqref{spe-ivp}. Then we give (Theorem \ref{thm:main}) a representation of the solution $u(x,t)$ of the problem \eqref{spe-ivp} in terms of the solution of this RH problem evaluated at a distinguished point of the complex plane of the spectral parameter. In Section \ref{sec:4} we discuss the construction of soliton solutions using the formalism of the RH problem. In Section \ref{sec:5} we study the long time asymptotics of the solution of the Cauchy problem \eqref{spe-ivp}. This study is then used in Section \ref{sec:5} to provide a sufficient condition for wave breaking of the solution of the Cauchy problem at a finite time.

\section{Lax pairs and eigenfunctions}\label{sec:2}

The RH formalism for integrable nonlinear equations makes use of the fact that it is possible to construct dedicated solutions of the linear equations from the associated Lax pair, which, being considered all together, are well-controlled, as functions of the spectral parameter, in the whole extended complex plane. These solutions are differently defined for different domains in the complex plane, and are related to each other at the boundaries between these domains. The latter fact is then interpreted as the ``analyticity defect'' for a (matrix-valued) function of the spectral parameter viewed as a function in the whole complex plane, and the inverse problem of the IST method for solving the Cauchy problem for the nonlinear equation in question is realized as a Riemann--Hilbert-type problem of reconstructing a piece-wise analytic function from the known ``analyticity defects'' of this function, in the form of jump conditions across certain contours supplemented by residue conditions (if any) at certain points in the complex plane of the spectral parameter.

An efficient approach to constructing such solution of the differential equations from the Lax pair is to pass to integral equations, whose solutions are particular solutions to the Lax pair equations. For this purpose, it is convenient to transform the Lax pair equations to a certain form, which is standard for establishing analytic properties of solutions near the singular points
(w.r.t.\ the spectral parameter) of the Lax pair equations.

Notice that the coefficient matrices $U$ and $V$ are traceless, which provides that the determinant of a matrix solution to \eqref{Lax} composed from two vector solutions is independent of $x$ and $t$.

In order to make the presentation close to that in the cases of other CH-type equations (see \cite{BS06,BS08,BSZ11,BS13}), it is convenient to introduce the spectral parameter $k\coloneqq\ii\lambda$.

Now notice that $U$ and $V$ have singularities (in the extended complex $k$-plane) at $k=0$ and at $k=\infty$. In order to control the behavior of solutions to \eqref{Lax} as functions of the spectral parameter $k$, we follow a strategy similar to that adopted for the CH equation \cite{BS06,BS08}. 

Namely, in order to control the large $k$ behavior of solutions of \eqref{Lax}, we will transform this Lax pair to the following form 
(see \cite{BC,BS06,BS08}):
\begin{subequations} \label{Lax-Q-form}
\begin{align}\label{Lax-Q-u}
&\hat\Phi_x+Q_x\hat\Phi=\hat U\hat\Phi,\\ 
\label{Lax-Q-v}
&\hat\Phi_t+Q_t\hat\Phi=\hat V\hat\Phi,
\end{align}
\end{subequations}
whose coefficients $Q(x,t,k)$, $\hat U(x,t,k)$, and $\hat V(x,t,k)$ have the following properties:
\begin{enumerate}[\rm(i)]
\item
$Q$ is diagonal and is unbounded as $k\to\infty$.
\item
$\hat U=\ord(1)$ and $\hat V=\ord(1)$ as $k\to\infty$.
\item
The diagonal parts of $\hat U$ and $\hat V$ decay as $k\to\infty$.
\item 
$\hat U\to 0$ and $\hat V\to 0$ as $x\to\pm\infty$.
\end{enumerate}

Since the coefficient matrix $U$ in \eqref{Lax-eqs-u} is the product of the spectral parameter and a matrix independent of the spectral parameter, in order to obtain $Q$, one has to diagonalize the latter matrix, i.e., to determine $P(x,t)$ such that
\[
PUP^{-1}=-Q.
\]
The freedom in determining such $P$ can be used in order to satisfy
item (iii) above, or, more precisely, to make the diagonal part of $\hat U$ identically $0$. 

Indeed, introducing 
\begin{equation}\label{w-q}
w:=u_x;\quad q\coloneqq\sqrt{1+w^2},
\end{equation}
setting
\begin{equation}\label{P}
P\coloneqq\sqrt{\frac{1+q}{2q}}
\begin{pmatrix}
1 & \frac{w}{1+q} \\ -\frac{w}{1+q} & 1
\end{pmatrix}
\end{equation}
so that $P^{-1}=\sqrt{\frac{1+q}{2q}}\left(\begin{smallmatrix}
	1 & -\frac{w}{1+q} \\ \frac{w}{1+q} & 1
\end{smallmatrix}\right)$, and introducing
\[
\hat\Phi\coloneqq P\Phi
\]
reduces \eqref{Lax-u} to \eqref{Lax-Q-u} with 
\begin{equation}\label{Q-x}
Q_x(x,t,k)=\ii k q(x,t) 
\begin{pmatrix}
1 & 0 \\ 0 & -1
\end{pmatrix}\equiv\ii kq(x,t)\sigma_3
\end{equation}
and 
\begin{equation}\label{U-hat}
\hat U=\hat U(x,t) 
= \frac{u_{xx}}{2q^2}
\begin{pmatrix}
0 & 1 \\ -1 & 0
\end{pmatrix}.
\end{equation}
Accordingly, the $t$-equation \eqref{Lax-v} of the Lax pair is transformed into 
\begin{equation}\label{t-V-check}
\hat \Phi_t=\check V \Phi,
\end{equation}
where
\begin{equation}\label{V-check}
\check V=-\frac{\ii k u^2 q}{2}\sigma_3 - \frac{1}{4\ii k q}\begin{pmatrix}
	1 & -w \\ -w & -1
\end{pmatrix}
+ \frac{u^2 u_{xx}}{4q^2}\begin{pmatrix}
	0 & 1 \\ -1 & 0
\end{pmatrix}.
\end{equation}
Noticing that $\check V\to -\frac{1}{4\ii k}\sigma_3$ as $x\to\pm\infty$, we write $\check V$ as 
\begin{align*}
\check V &= \left(-\frac{\ii k u^2 q}{2}-\frac{1}{4\ii k }\right)\sigma_3-\frac{1}{4\ii k q}\begin{pmatrix}
	1-q & -w \\ -w & -1+q
\end{pmatrix}
+ \frac{u^2 u_{xx}}{4q^2}
\begin{pmatrix}
0 & 1 \\ -1 & 0
\end{pmatrix}\\
&= \left(-\frac{\ii k u^2 q}{2}-\frac{1}{4\ii k }\right)\sigma_3
+ \hat V,
\end{align*}
where
\[
\hat V:=-\frac{1}{4\ii k q}\begin{pmatrix}
	1-q & -w \\ -w & -1+q
\end{pmatrix}
+ \frac{u^2 u_{xx}}{4q^2}\begin{pmatrix}
	0 & 1 \\ -1 & 0
\end{pmatrix},
\]
and thus \eqref{t-V-check} takes the form \eqref{Lax-Q-v} provided $Q$ is defined in such a way that 
\begin{equation}\label{Q-t}
Q_t=\left(\frac{\ii k u^2 q}{2}+\frac{1}{4\ii k }\right)\sigma_3.
\end{equation}
Now notice that \eqref{Q-x} and \eqref{Q-t} are compatible since the compatibility condition $Q_{xt}=Q_{tx}$ reads
\begin{equation}\label{cons-l}
q_t=\frac{1}{2}(u^2q)_x,
\end{equation}
which is the ``conservation law'' form of the short pulse equation \eqref{spe} (notice that the ``conservation law'' form of the SW equation \eqref{mHS-om} has a very similar form: $q_t=\frac{1}{2}(u q)_x$, but in that case, $q=\sqrt{1-u_{xx}}$). Thus $Q$ can be correctly defined by 
\begin{equation}\label{Q}
Q(x,t,k)=\left(\ii k\hat x(x,t)+\frac{t}{4\ii k}\right)\sigma_3,
\end{equation}
where
\begin{equation}\label{x-hat}
\hat x(x,t):=x-\int_x^\infty(q(y,t)-1)\dd y.
\end{equation}
Introduce 
\begin{equation}\label{zam}
\widetilde{\Phi}=\hat{\Phi}\eul^Q
\end{equation} 
and think about $\widetilde{\Phi}$ as a $2\times 2$ matrix. Then 
\eqref{Lax-Q-form} can be rewritten as 
\begin{subequations} \label{comsys}
\begin{align}
&\widetilde{\Phi}_x+\croch{Q_x,\widetilde{\Phi}}=\hat U\widetilde{\Phi},\\ 
&\widetilde{\Phi}_t+\croch{Q_t,\widetilde{\Phi}}=\hat V\widetilde{\Phi},
\end{align}
\end{subequations}
where $\croch{\,\cdot\,,\,\cdot\,}$ denotes the matrix commutator.

Now determine the particular (Jost) solutions $\widetilde{\Phi}_{\pm}(x,t)$ of \eqref{comsys} as the $2\times 2$ matrix-valued solutions of the associated Volterra integral equations:
\begin{equation}\label{inteq}
\widetilde{\Phi}_{\pm}(x,t,k)=I+\int_{\pm\infty}^{x}\eul^{Q(y,t,k)-Q(x,t,k)}\hat U(y,t,k)\widetilde{\Phi}_{\pm}(y,t,k)\eul^{Q(x,t,k)-Q(y,t,k)}\dd y,
\end{equation}
or, taking into account the definition of $Q$,
\begin{equation}\label{eq}
\begin{split}
\widetilde{\Phi}_{+}(x,t,k)=I-\int_x^{\infty}\eul^{\ii k\int_x^y{q(\xi,t)}\dd\xi\,\sigma_3}\hat U(y,t,k)\widetilde{\Phi}_{+}(y,t,k)\eul^{-\ii k\int_x^y{q(\xi,t)}\dd\xi\,\sigma_3}\dd y,\\
\widetilde{\Phi}_{-}(x,t,k)=I+\int_{-\infty}^x\eul^{-\ii k\int_y^x{q(\xi,t)}\dd\xi\,\sigma_3}\hat U(y,t,k)\widetilde{\Phi}_{-}(y,t,k)\eul^{\ii k\int_y^x{q(\xi,t)}\dd\xi\,\sigma_3}\dd y,
\end{split}
\end{equation}
where $I$ is the identity matrix.

We denote by $\mu^{(1)}$ and $\mu^{(2)}$ the columns of a $2\times 2$ matrix $\mu=\begin{pmatrix}\mu^{(1)}& \mu^{(2)}\end{pmatrix}$. Then it follows from \eqref{inteq} that for all $(x,t)$:
\begin{enumerate}[(i)]
\item
$\det\widetilde{\Phi}_\pm\equiv 1$.
\item 
$\widetilde{\Phi}_-^{(1)}$ and $\widetilde{\Phi}_+^{(2)}$ are analytic in $\{k\mid\Im k>0$ and continuous in $\{k\mid\Im k\geq 0,\, k\neq 0\}$.
\item 
$\widetilde{\Phi}_+^{(1)}$ and $\widetilde{\Phi}_-^{(2)}$ are analytic in $\{k\mid\Im k <0$ and continuous in $\{k\mid\Im k\leq 0,\, k\neq 0\}$.
\item
$\begin{pmatrix} 
\widetilde{\Phi}_-^{(1)} & \widetilde{\Phi}_+^{(2)}\end{pmatrix}\to I$ as $k\to\infty$ in $\{k\mid\Im k\geq 0\}$.
\item
$\begin{pmatrix} 
\widetilde{\Phi}_+^{(1)} & \widetilde{\Phi}_-^{(2)}\end{pmatrix}\to I$ as $k\to\infty$ in $\{k\mid\Im k\leq 0\}$.
\item
Symmetries:
\begin{equation}\label{sym-phi}
\overline{\widetilde{\Phi}_\pm(\,\cdot\,,\,\cdot\,,\bar k)}=\widetilde{\Phi}_\pm(\,\cdot\,,\,\cdot\,,-k)=\begin{pmatrix} 
0 & 1 \\ -1 & 0 \end{pmatrix}\widetilde{\Phi}_\pm(\,\cdot\,,\,\cdot\,,k)\begin{pmatrix} 0 & -1 \\ 1 & 0 \end{pmatrix}.
\end{equation}
Overline means complex conjugation for all $k$ for which the functions above are defined.
\end{enumerate}
The latter property is due to the symmetries of the coefficient matrix $\check U\coloneqq\hat U-\ii q\sigma_3$
\begin{equation}\label{sym}
\overline{\check U(\,\cdot\,,\,\cdot\,, \bar k)}=\check U(\,\cdot\,,\,\cdot\,, -k)=\begin{pmatrix} 0 & 1 \\ -1 & 0 \end{pmatrix}\check U(\,\cdot\,,\,\cdot\,, k)\begin{pmatrix}0&-1\\1&0\end{pmatrix}.
\end{equation}

\begin{rem}\label{rem:nls}
Introducing the new variable $\hat x$ as in \eqref{x-hat} and taking into account the bijectivity of the map $x\mapsto\hat x$ for any $t\geq 0$ (which is due to the fact that $q>0$), equation \eqref{Lax-Q-u} reduces to the (non-self-adjoint) Dirac equation
for $\Hat{\Hat\Phi}(\hat x,t,k)\coloneqq\hat\Phi(x(\hat x,t),t,k)$:
\begin{equation}
\Hat{\Hat\Phi}_{\hat x}+\ii k\sigma_3\Hat{\Hat\Phi}=\Hat{\Hat U}\Hat{\Hat\Phi},
\label{Lax-x-hat}
\end{equation}
where
\begin{equation}\label{U-2-hat}
\Hat{\Hat U}=\frac{u_{xx}}{2q^3}\begin{pmatrix}0&1\\-1&0 \end{pmatrix},
\end{equation}
which is the spatial equation from the Lax pair associated with the \emph{focusing} nonlinear Schr\"odinger (f\,NLS) equation; see, e.g., \cite{FT}. Therefore, the analytical properties of $\tilde\Phi_\pm$ stated above are the same as in the case of the f\,NLS equation. 
\end{rem}

\begin{rem}\label{rem:sw}
In the case of the SW equation \eqref{mHS-om}, the spatial equation from the Lax pair is also the Dirac equation (like \eqref{Lax-x-hat}), but with a self-adjoint potential $\Hat{\Hat U}$, see \cite{BSZ11}.
\end{rem}

The scattering matrix $s(k)$ (independent of $(x,t)$) is introduced by 
\begin{equation} \label{scat}
\widetilde{\Phi}_+(x,t,k)=\widetilde{\Phi}_-(x,t,k)\eul^{-Q(x,t,k)\sigma_3} s(k)\eul^{Q(x,t,k)\sigma_3},\qquad k\in\D{R}
\end{equation}
with $Q$ defined by \eqref{Q}, which, due to the symmetries \eqref{sym}, can be written in terms of two scalar spectral functions, $a(k)$ and $b(k)$: 
\begin{equation}\label{sym-s}
s(k)=\begin{pmatrix}\overline{a(k)}&b(k)\\-\overline{b(k)}&a(k) 
\end{pmatrix},\qquad k\in\D{R},
\end{equation}
such that $\overline{a(k)}=a(-k)$ and $\overline{b(k)}=b(-k)$. In view of Remark \ref{rem:nls}, the spectral functions have properties similar to those in the case of the focusing NLS equation \cite{FT}:
\begin{enumerate}[a)]
\item
$a(k)$ and $b(k)$ are determined by $u(x,0)$ through the solutions $\widetilde{\Phi}_\pm(x,0)$ of equations \eqref{eq}, where $\hat U=\hat U(x,0)$ is defined by \eqref{U-hat} with $u$ replaced by $u_0(x)$ (and similarly for $q$).
\item 
$a(k)$ is analytic in $\{k\mid\Im k>0\}$ and continuous in $\{k\mid\Im k\geq 0\}$; moreover, $a(k)\to 1$ as $k\to\infty$.
\item 
$b(k)$ is continuous for $k\in\D{R}$, and $b(k)\to 0$ as $\abs{k}\to\infty$.
\item
$\abs{a(k)}^2+\abs{b(k)}^2=1$ for $k\in\D{R}$.
\item 
Let $\{k_j\}_1^N$ be the set of zeros of $a(k)$. We make the \emph{genericity assumption} that these zeros are finite in number, simple, and no zero is real. Then $\widetilde{\Phi}^{(1)}_{-}(x,t,k_j)$ and $\widetilde{\Phi}^{(2)}_{+}(x,t,k_j)$ are linearly dependent; moreover,
\begin{equation}
\widetilde{\Phi}^{(1)}_{-}(x,t,k_j)=\eul^{2\ii(k_j\hat{x}-\frac{t}{4k_j})}\widetilde{\Phi}^{(2)}_{+}(x,t,k_j)\chi_j
\label{res-phi}
\end{equation}
with some constants $\chi_j$.
\end{enumerate}

\section{The Riemann--Hilbert problem}\label{sec:3}
\subsection{A RH problem constructed from dedicated eigenfunctions}

The analytic properties of $\widetilde{\Phi}_\pm$ stated above allow rewriting the scattering relation \eqref{scat} as a jump relation for a piece-wise meromorphic (w.r.t.\ $k$), $2\times 2$-valued function (depending on $x$ and $t$ as parameters). Indeed, define $M(x,t,k)$ by
\begin{equation}\label{M}
M(x,t,k)=
\begin{cases}
\begin{pmatrix}\frac{\widetilde{\Phi}_-^{(1)}(x,t,k)}{a(k)} & \widetilde{\Phi}_+^{(2)}(x,t,k)\end{pmatrix},
	&\  \Im k>0, \\
\begin{pmatrix}\widetilde{\Phi}_+^{(1)}(x,t,k) & \frac{\widetilde{\Phi}_-^{(2)}(x,t,k)}{\overline{a(k)}}\end{pmatrix},
	&\ \Im  k<0.
\end{cases}
\end{equation} 
Define 
\begin{equation}  \label{refl}
r(k):=-\frac{\overline{b(k)}}{a(k)}\quad\text{for }k\in\D{R}.
\end{equation}
Then the limiting values $M_\pm(x,t,k)$, $k\in\D{R}$ of $M$ as $k$ is approached from the domains $\pm\Im k>0$ are related as follows:
\begin{equation}\label{RH-x}
M_+(x,t,k)=M_-(x,t,k)\eul^{-Q(x,t,k)\sigma_3}J_0(k)\eul^{Q(x,t,k)\sigma_3},\quad k\in\D{R},
\end{equation}
where 
\begin{equation}\label{J0}
J_0(k)=\begin{pmatrix}
1+\abs{r(k)}^2  & \overline{r(k)} \\ r(k)  & 1
\end{pmatrix}.
\end{equation}
Taking into account the properties of $\widetilde{\Phi}_\pm$ and $s(k)$, $M(x,t,k)$ satisfies the following properties:
\begin{enumerate}[(i)]
\item 
$\det M\equiv 1$.
\item
Normalization: $M(\,\cdot\,,\,\cdot\,,k)\to I$ as $k\to\infty$.
\item
Symmetries:
\begin{equation}
\overline{M(\,\cdot\,,\,\cdot\,,\bar k)}=M(\,\cdot\,,\,\cdot\,,-k)=\begin{pmatrix} 0 & 1 \\ -1 & 0 \end{pmatrix}M(\,\cdot\,,\,\cdot\,,k)\begin{pmatrix} 0 & -1 \\ 1 & 0 \end{pmatrix}.
\label{M-sym}
\end{equation}
\item 
$M^{(1)}$ has poles at the zeros $k_j$ of $a(k)$ (in $\{k\mid\Im k >0\}$), whereas $M^{(2)}$ has poles at the conjugates $\overline{k_j}$ (in $\{k\mid\Im k <0\}$), $j=1,2,\dots,N$, and the following residue conditions are satisfied:
\begin{equation}\label{res-M}
\begin{split}
\Res_{k=k_j}M^{(1)}(x,t,k)&=\ii\gamma_j\eul^{2\ii\bigl(k_j\hat{x(x,t)}-\frac{t}{4k_j}\bigr)}M^{(2)}(x,t,k_j),\\
\Res_{k=\bar k_j}M^{(2)}(x,t,k)&=\ii\bar\gamma_j 
\eul^{2\ii\bigl(\bar k_j\hat{x}(x,t)-\frac{t}{4\bar k_j}\bigr)}M^{(1)}(x,t,\bar k_j)
\end{split}
\end{equation}
with some constants $\gamma_j$.
\end{enumerate}

The idea of the Riemann--Hilbert problem approach in the inverse scattering method consists in considering the jump relation \eqref{RH-x} complemented by the normalization condition $M\to I$ as $k\to\infty$, and by the residue conditions \eqref{res-M}, as the factorization problem of finding $M(x,t,k)$ (and, consequently, $u(x,t)$) from the jump matrix in \eqref{RH-x} and the residue conditions \eqref{res-M} at the singularities of $M$. As in the case of any Camassa--Holm-type equation, when realizing this idea, one faces the problem that the determination of the jump matrix, which is $\eul^{-Q}J_0(k)\eul^{Q}$, involves not only objects uniquely determined by the initial data $u(x,0)$ (the functions $a(k)$ and $b(k)$ involved in $J_0(k)$ and the constants involved in the residue conditions), but also $Q=Q(x,t,k)$, which is obviously not determined by $u(x,0)$ (it involves $u(x,t)$ for $t\geq 0$). This problem can be resolved by considering a RH problem depending, instead of $(x,t)$, on the parameters $\hat x$ and $t$. Then the jump matrix and the residue conditions become explicit in the new variables $\hat x$ and $t$. Indeed, introducing
\[
\hat M(\hat x,t,k)\coloneqq M(x(\hat x,t),t,k),
\]
the jump condition takes the form
\begin{equation}\label{jump-y}
\hat M_+(\hat x,t,k)=\hat M_-(\hat x,t,k)J(\hat x,t,k),\qquad k\in\D{R},
\end{equation}
where
\begin{equation}\label{J-J0}
J(\hat x,t,k)\coloneqq\eul^{-\hat Q(\hat x,t,k)}J_0(k)\eul^{\hat Q(\hat x,t,k)}
\end{equation}
with $J_0(k)$ as in \eqref{J0} and 
\begin{equation}\label{Q-hat}
\hat Q(\hat x,t,k)\coloneqq\left(\ii k\hat x+\frac{t}{4\ii k}\right)\sigma_3,
\end{equation}
Accordingly, the residue conditions \eqref{res-M} take the form
\begin{equation}\label{res-M-hat}
\begin{split}
\Res_{k=k_j}\hat M^{(1)}(\hat x,t,k)&=\ii\gamma_j 
\eul^{2\ii \bigl(k_j\hat{x}-\frac{t}{4k_j}\bigr)}\hat M^{(2)}(\hat x,t,k_j),\\
\Res_{k=\bar k_j}\hat M^{(2)}(\hat x,t,k)&=\ii\bar\gamma_j 
\eul^{-2\ii\bigl(\bar k_j\hat{x}-\frac{t}{4\bar k_j}\bigr)}\hat M^{(1)}(\hat x,t,\bar k_j).
\end{split}
\end{equation}

Recall that the jump and residue conditions for $\hat M(\hat x,t,k)$ were obtained above assuming that there exists a solution $u(x,t)$ of the SP equation decaying to $0$ as $x\to\pm\infty$ for any fixed $t>0$. On the other hand, the conditions \eqref{jump-y}--\eqref{res-M-hat} can be viewed as a factorization problem of Riemann--Hilbert (RH) type:

\begin{rh-pb*}
Given $\{r(k),\,k\in\D{R};\{k_j,\gamma_j\}_{1}^N\}$, find a piece-wise (w.r.t.\ $\D{R}$) meromorphic function $\hat M(\hat x,t,k)$ satisfying the conditions \eqref{jump-y}--\eqref{res-M-hat} complemented by the normalization condition 
\begin{equation}
\hat M(\hat x,t,k)\to I\quad\text{ as}\ k\to\infty.
\label{M-norm}
\end{equation}
\end{rh-pb*}

\begin{rem}[symmetries]\label{rem:sym}
Since the jump matrix $J$ satisfies the symmetry conditions described in \eqref{M-sym}, it follows from the uniqueness of the solution $\hat M$ of the RH problem that this solution satisfies \eqref{M-sym} as well.
\end{rem}

\begin{rem}[unique solvability]\label{rem:nls2}
The structure of the jump matrix and the residue conditions are the same as in the case of the focusing NLS equation (only the dependence on $\hat x$ and $t$, which are just the \emph{parameters} for the RH problem, is different), which implies that for all $\hat x$ and $t$, there exists a unique solution of the RH problem \eqref{jump-y}--\eqref{M-norm} provided that $r(k)$ can be represented as $r(k)=\int_{-\infty}^\infty\hat r(s)\eul^{\ii k s}\dd s$ with some $\hat r(s)\in L^1(-\infty,\infty)$, see \cite{FT}.
\end{rem}

\subsection{Recovering the solution of the Cauchy problem from the associated RH problem}

Now our goal is to show that $u(x,t)$ can be recovered in terms of $\hat M(\hat x,t,k)$, which is considered as the solution of the Riemann--Hilbert problem \eqref{jump-y}--\eqref{M-norm} (notice that the data for this problem are uniquely determined by $u(x,0)$), evaluated at $k=0$. Indeed this value of $k$ is specific for \eqref{Lax-u}, since the coefficient matrix $U$ in \eqref{Lax-u} vanish identically at $k=0$.

In order to have a good control of the behavior of $\hat M(\hat x,t,k)$ as $k\to 0$, it is convenient to rewrite the Lax pair \eqref{Lax} in the form
\begin{subequations}\label{Lax-0}
\begin{align}\label{Lax-0-u}
&\Phi_x+\ii k \sigma_3\Phi=U_0 \Phi \\
\label{Lax-0-v}
&\Phi_t+\frac{1}{4\ii k} \sigma_3\Phi=V_0 \Phi,
\end{align}
\end{subequations} 
where 
\begin{subequations} \label{Lax-eqs-0}
\begin{align}\label{Lax-eqs-0-u}
&U_0=-\ii kw\begin{pmatrix}
0 & 1\\
1 & 0\end{pmatrix},
\\
\label{Lax-eqs-0-v}
&V_0= -\frac{\ii ku^2}{2}\begin{pmatrix}
1 & w\\
w & -1\end{pmatrix}+\frac{u}{2}
\begin{pmatrix} 0 & -1 \\ 1 & 0
\end{pmatrix}.
\end{align}
\end{subequations}
Notice that $U_0\to 0$ and $V_0\to 0$ as $\abs{x}\to\infty$. Besides, it is important that $U_0(x,t,0)\equiv 0$.

Introduce 
\begin{equation}
Q_0(x,t,k)\coloneqq\left(\ii kx+\frac{t}{4\ii k}\right)\sigma_3
\label{Q-0}
\end{equation}
and 
\begin{equation}\label{zam0}
\widetilde{\Phi}_0={\Phi}\eul^{Q_0}.
\end{equation} 
Then \eqref{Lax-0} can be rewritten as 
\begin{subequations} \label{comsys-0}
\begin{align}
&\widetilde{\Phi}_{0x}+\croch{Q_{0x},\widetilde{\Phi}_0}= U_0\widetilde{\Phi}_0,\\ 
&\widetilde{\Phi}_{0t}+\croch{Q_{0t},\widetilde{\Phi}_0}= V_0\widetilde{\Phi}_0.
\end{align}
\end{subequations}
The Jost solutions $\widetilde{\Phi}_{0\pm}(x,t,k)$ of \eqref{comsys-0} are determined, similarly to above, as the solutions of associated Volterra integral equations:
\begin{equation}\label{inteq-0}
\widetilde{\Phi}_{0\pm}(x,t,k)=I+\int_{\pm\infty}^x\eul^{\ii k(y-x)} U_0(y,t,k)\widetilde{\Phi}_{0\pm}(y,t,k)\eul^{\ii k (x-y)}\dd y.
\end{equation}

Now, since $U_0(x,t,0)\equiv 0$, we have the following important property:
\begin{equation}
\widetilde{\Phi}_{0\pm}\left(x,t,0\right)\equiv I
\label{struct}
\end{equation}
for all $x$ and $t$. Moreover, directly using \eqref{inteq-0} one obtains

\begin{prop}\label{prop:expand}
As $k\to 0$,
\[
\widetilde{\Phi}_{0\pm}(x,t,k)=I-\ii ku(x,t)
\begin{pmatrix}
0 & 1 \\ 1 & 0
\end{pmatrix}+\ord(k^2).
\]
\end{prop}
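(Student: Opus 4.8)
The plan is to iterate the Volterra equation \eqref{inteq-0} as a Neumann series in the small parameter $k$, exploiting the fact that by \eqref{Lax-eqs-0-u} the potential is linear in $k$,
\[
U_0(y,t,k)=-\ii k\,w(y,t)\begin{pmatrix}0&1\\1&0\end{pmatrix},\qquad w=u_x,
\]
so that $U_0=\ord(k)$ wherever $w$ is controlled. Writing \eqref{inteq-0} with the conjugating exponentials in the commutator form \eqref{comsys-0}, i.e.\ as $\widetilde{\Phi}_{0\pm}=I+\mathcal K_\pm[\widetilde{\Phi}_{0\pm}]$ with
\[
\mathcal K_\pm[F](x)=\int_{\pm\infty}^x\eul^{\ii k(y-x)\sigma_3}U_0(y,t,k)\,F(y)\,\eul^{\ii k(x-y)\sigma_3}\,\dd y,
\]
the operator $\mathcal K_\pm$ has $\ord(k)$ norm for $k$ near $0$, so the series $\widetilde{\Phi}_{0\pm}=I+\mathcal K_\pm[I]+\mathcal K_\pm^2[I]+\cdots$ converges and, since $\mathcal K_\pm[I]=\ord(k)$, one gets $\widetilde{\Phi}_{0\pm}=I+\mathcal K_\pm[I]+\ord(k^2)$. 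Note that \eqref{struct} is just the case $k=0$, so only the $\ord(k)$ coefficient remains to be extracted.

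Next I would compute $\mathcal K_\pm[I]$ to first order. Because $U_0$ is off-diagonal, the conjugation by $\eul^{\pm\ii k(x-y)\sigma_3}$ only multiplies its two entries by the scalars $\eul^{\pm 2\ii k(y-x)}$, giving
\[
\mathcal K_\pm[I](x)=-\ii k\int_{\pm\infty}^x w(y,t)\begin{pmatrix}0&\eul^{2\ii k(y-x)}\\\eul^{-2\ii k(y-x)}&0\end{pmatrix}\dd y.
\]
Replacing the exponentials by $1$ (their deviation is $\ord(k)$ and is accounted for below) and using $w=u_x$ together with the boundary condition $u\to 0$ as $x\to\pm\infty$ gives $\int_{\pm\infty}^x w(y,t)\,\dd y=u(x,t)$ in both cases, hence
\[
\mathcal K_\pm[I](x)=-\ii k\,u(x,t)\begin{pmatrix}0&1\\1&0\end{pmatrix}+\ord(k^2),
\]
which is exactly the asserted first-order term.

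The main point to check is that the two discarded contributions are genuinely $\ord(k^2)$, uniformly at fixed $(x,t)$. The tail $\sum_{n\ge 2}\mathcal K_\pm^n[I]$ is $\ord(k^2)$ once $\mathcal K_\pm$ is bounded with $\ord(k)$ norm, which needs $w=u_x\in L^1$; the error from replacing $\eul^{\pm2\ii k(y-x)}$ by $1$ is controlled by $\abs{\eul^{\pm 2\ii k(y-x)}-1}\le 2\abs{k}\,\abs{y-x}$, so it is $\ord(k^2)$ provided the first moment $\int_{\pm\infty}^x\abs{y-x}\,\abs{w(y,t)}\,\dd y$ is finite. Both requirements follow from the assumed sufficiently fast decay of $u$ (hence of $u_x$) as $\abs{x}\to\infty$. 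One also needs the conjugating exponentials to stay bounded over the semi-infinite integration range: for $k\in\D R$ they have modulus $1$, and this persists as $k\to 0$ in the relevant sectors, so no growth is introduced. Establishing this uniform first-moment estimate is the only nonroutine step; granting it, the Neumann expansion yields the claim for both signs simultaneously.
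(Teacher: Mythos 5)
Your argument is correct and is exactly the computation the paper has in mind: the paper offers no proof beyond the remark that the expansion follows ``directly using \eqref{inteq-0}'', and your Neumann iteration of that Volterra equation, with $U_0=\ord(k)$, $\int_{\pm\infty}^x u_y\,\dd y=u(x,t)$, and the first-moment control of the conjugating exponentials, is precisely that direct derivation spelled out. No discrepancy with the paper's approach.
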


Further, we notice that $\widetilde{\Phi}_{\pm}$ and $\widetilde{\Phi}_{0\pm}$, being related to the same system of equations \eqref{Lax}, are related as 
\begin{equation}\label{zvazok}
\widetilde{\Phi}_{\pm}(x,t,k)=P(x,t)\widetilde{\Phi}_{0\pm}(x,t,k)\eul^{-Q_0(x,t,k)}C_{\pm}(k)\eul^{Q(x,t,k)},
\end{equation}
where $C_{\pm}(k)$ are some matrices independent of  $x$ and $t$. Passing to the limits $x\to\pm\infty$ determines $C_{\pm}(k)$:
\[
C_+(k)=I,\qquad C_-(k)=\eul^{\ii k \alpha}\sigma_3,
\]
where $\alpha=\int_{-\infty}^\infty(q(y,t)-1)\dd y$; notice that in view of \eqref{cons-l}, $\alpha$ is constant (does not depend on $t$).

Combining Proposition \ref{prop:expand} with \eqref{zvazok} one gets
\begin{equation}\label{phi-exp}
\begin{split}
\widetilde{\Phi}_+(x,t,k)&=P(x,t)\left(
I-\ii k\left(u(x,t)\sigma_1+\int_x^\infty (q(y,t)-1)\dd y\  \sigma_3\right)+\ord(k^2)\right),\\
\widetilde{\Phi}_-(x,t,k)&=P(x,t)\left(
I-\ii k\left(u(x,t)\sigma_1 -\int^x_{-\infty} (q(y,t)-1)\dd y\  \sigma_3\right)+\ord(k^2)\right)
\end{split}
\end{equation} 
as $k\to 0$. Using these expansions in \eqref{scat} we expand $s(k)$ at $k=0$, then by \eqref{sym-s} we obtain
\begin{equation}
a(k)=1+\ii k\alpha+\ord(k^2),\qquad b(k)=\ord(k^2),\quad k\to 0.
\label{ab-exp}
\end{equation}
Finally, substituting \eqref{phi-exp} and \eqref{ab-exp} into 
\eqref{M} gives
\begin{equation}\label{M-k-0}
M(x,t,k)=P(x,t)\left(I-\ii k\left(u(x,t)\sigma_1 +\int_x^\infty (q(y,t)-1)\dd y\ \sigma_3\right)+\ord(k^2)\right),\quad k\to 0,
\end{equation}
which, in view of \eqref{x-hat}, reads
\begin{equation}\label{M-k-0-hat}
M(x,t,k)=P(x,t)\left(I-\ii k\begin{pmatrix}
x-\hat x & u \\ u & \hat x - x
\end{pmatrix}+ \ord(k^2)\right),\quad k\to 0.
\end{equation}

This relation \eqref{M-k-0-hat} leads to the following result.

\begin{thm}[representation result]\label{thm:main}
Assume that the Cauchy problem \eqref{spe-ivp} for the SP equation has a solution $u(x,t)$s. Let $\{r(k),\,k\in\D{R};\,\{k_j,\gamma_j\}_{1}^N\}$ be the spectral data determined by $u_0(x)$, and let $\hat M(\hat x,t,k)$ be the solution of the associated RH problem \eqref{jump-y}--\eqref{M-norm}. Then, evaluating $\hat M$ as $k\to 0$, we get a parametric representation for the solution $u(x,t)$ of the Cauchy problem \eqref{spe-ivp}:
\begin{subequations}\label{eq:main}
\begin{equation}\label{eq:rec}
u(x,t)=\hat u(\hat x(x,t),t)
\end{equation}
with
\begin{align}\label{xofy}
x(\hat x,t)&=\hat x+f_1(\hat x,t),\\
\hat u(\hat x,t)&=f_2(\hat x,t),
\label{uofx}
\end{align}
where 
\begin{equation}
\begin{pmatrix}
f_1 & f_2 \\ f_2 & -f_1
\end{pmatrix}(\hat x,t)\coloneqq\lim_{k\to 0}\frac{\ii}{k}\left(\hat M^{-1}(\hat x,t,0)\hat M(\hat x,t,k)-I\right).
\label{rep}
\end{equation}
\end{subequations}
\end{thm}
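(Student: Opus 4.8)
The plan is to read the representation directly off the small-$k$ expansion \eqref{M-k-0-hat}, once I have identified the solution $\hat M$ of the RH problem with the reparametrized Jost matrix. Under the standing assumption that a decaying solution $u(x,t)$ exists, I would first build the Jost solutions $\widetilde\Phi_\pm$ and hence $M(x,t,k)$ as in \eqref{M}; by construction $M$ satisfies the jump relation \eqref{RH-x}, the residue conditions \eqref{res-M}, the normalization $M\to I$, and the symmetry \eqref{M-sym}. Passing to the variable $\hat x$ turns these into the data \eqref{jump-y}--\eqref{res-M-hat} together with \eqref{M-norm}, whose solution is unique by Remark \ref{rem:nls2}. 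Hence $\hat M(\hat x,t,k)=M(x(\hat x,t),t,k)$, and I may substitute the expansion \eqref{M-k-0-hat} into the limit \eqref{rep}.

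The computation is then immediate. Setting $k=0$ in \eqref{M-k-0-hat} gives $M(x,t,0)=P(x,t)$, which is invertible since $P^{-1}$ is exhibited explicitly in \eqref{P}. Left-multiplying \eqref{M-k-0-hat} by $M^{-1}(x,t,0)=P^{-1}(x,t)$ I would obtain
\[
M^{-1}(x,t,0)\,M(x,t,k)=I-\ii k\begin{pmatrix}x-\hat x & u\\ u & \hat x-x\end{pmatrix}+\ord(k^2),\qquad k\to 0,
\]
so that $\tfrac{\ii}{k}\bigl(M^{-1}(x,t,0)M(x,t,k)-I\bigr)$ tends to $\left(\begin{smallmatrix}x-\hat x & u\\ u & \hat x-x\end{smallmatrix}\right)$ as $k\to 0$. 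Rewriting in the variable $\hat x$ via $\hat M(\hat x,t,k)=M(x(\hat x,t),t,k)$, the left-hand side of \eqref{rep} equals this same matrix evaluated at $x=x(\hat x,t)$. Matching entries with the prescribed shape $\left(\begin{smallmatrix} f_1 & f_2\\ f_2 & -f_1\end{smallmatrix}\right)$ identifies $f_1(\hat x,t)=x(\hat x,t)-\hat x$ and $f_2(\hat x,t)=u(x(\hat x,t),t)$, which are precisely \eqref{xofy} and \eqref{uofx}. Finally, since $q>0$ the map $x\mapsto\hat x$ is a bijection for each fixed $t$ (cf.\ Remark \ref{rem:nls}), so $\hat x\mapsto x(\hat x,t)$ is well defined and \eqref{eq:rec} recovers $u(x,t)=\hat u(\hat x(x,t),t)$.

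I expect the only genuine obstacle to lie in two structural points rather than in the algebra. First, I must check that the limit in \eqref{rep} really has the traceless symmetric form $\left(\begin{smallmatrix} f_1 & f_2\\ f_2 & -f_1\end{smallmatrix}\right)$ with real $f_1,f_2$, so that the definition is consistent; both facts follow from the symmetries \eqref{M-sym}. Writing $M^{-1}(0)M(k)=I-\ii k B+\ord(k^2)$, the relation $M(-k)=\left(\begin{smallmatrix}0&1\\-1&0\end{smallmatrix}\right)M(k)\left(\begin{smallmatrix}0&-1\\1&0\end{smallmatrix}\right)$ forces $\left(\begin{smallmatrix}0&1\\-1&0\end{smallmatrix}\right)B\left(\begin{smallmatrix}0&-1\\1&0\end{smallmatrix}\right)=-B$, i.e.\ $B$ has exactly the stated shape, and the conjugation symmetry yields reality of its entries. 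Second, the identification $\hat M=M(x(\hat x,t),t,\cdot)$ rests entirely on uniqueness of the RH solution, which is where the genericity assumption on the zeros $k_j$ and the $L^1$ hypothesis on $\hat r$ from Remark \ref{rem:nls2} enter; this is the conceptual crux, as the expansion \eqref{M-k-0-hat} itself has already been established before the statement. Everything else is the routine matrix manipulation sketched above.
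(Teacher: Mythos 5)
Your proposal is correct and follows essentially the same route as the paper: the paper's proof of Theorem~\ref{thm:main} is precisely the derivation of the small-$k$ expansion \eqref{M-k-0-hat} (via Proposition~\ref{prop:expand}, the relation \eqref{zvazok}, and the expansions \eqref{phi-exp}--\eqref{ab-exp}), followed by reading off $f_1=x-\hat x$ and $f_2=u$ after left-multiplication by $M^{-1}(x,t,0)=P^{-1}(x,t)$. Your added checks (the symmetric traceless form of the limit via \eqref{M-sym}, and the identification $\hat M=M(x(\hat x,t),t,\cdot)$ through uniqueness of the RH solution) are consistent with the paper's Remarks~\ref{rem:sym} and \ref{rem:nls2}.
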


\begin{rem}
The representation result of Theorem \ref{thm:main} can be interpreted in two ways:
\begin{enumerate}[i)]
\item 
If there is a global (in time) classical solution $u(x,t)$ of the Cauchy problem \eqref{spe-ivp}, then \eqref{eq:main} gives a parametric representation of this solution for all $t$.
\item
If wave breaking occurs at a finite time, then the bijectivity of the map $\hat x\mapsto x$ described by \eqref{xofy} is broken for certain values of $t$. Then \eqref{xofy} and \eqref{uofx} present a continuation of the solution of the Cauchy problem \eqref{spe-ivp} after the wave breaking. Particularly, if the bijectivity of the map $\hat x\mapsto x$ is restored for all times $t$ greater than a certain $T$, \eqref{xofy} and \eqref{uofx} present a solution that, after undergoing a sequence of wave breakings, retrieves the form of a classical solution to the SP equation \eqref{spe-ic}.
\end{enumerate}
\end{rem}

\begin{rem}
We emphasize that the wave breaking mechanism for the SP equation is exclusively related to the break of bijectivity in \eqref{xofy} while the solution on the variables $(\hat x,t)$ always exists globally, see Remark \ref{rem:nls2}. This is quite different comparing with other Camassa--Holm-type equations; particularly, this is different from the case of the SW equation \eqref{mHS-om}, whose RH formalism is very close to that for the SP equation, including the dependence of the RH problem on the parameters $\hat x$ and $t$, see \cite{BSZ11}. The reason is that for all equations mentioned above, the RH formalism relies on the so-called sign condition to be satisfied by the initial data. This condition, on one hand, provides the existence of a global in time solution to the Cauchy problem, and on the other hand, plays a crucial role in introducing the new spatial variable $\hat x$, see \cite{BS06,BS08,BS13,BSZ11}. For instance, this condition reads $-u_{0xx}+1>0$ in the case of the SW equation \eqref{mHS-om}, or $u_0-u_{0xx}+1>0$ in the case of the CH equation \eqref{CH-om}. In the case of the SP equation, the analogous condition would read $1+u_{0x}^2>0$, see the definition of $q$ in \eqref{w-q}, which is obviously automatically satisfied. 
\end{rem}

\subsection{From the RH problem to a solution of the SP equation}

The representation result of Theorem \ref{thm:main} has been actually obtained under assumption of existence of a solution $u(x,t)$ to the Cauchy problem \eqref{spe-ivp}. On the other hand, an important element of the inverse scattering approach to nonlinear equations is the possibility to check directly that a solution of the RH problem with any appropriate $r(k)$ (ensuring the unique solvability of the RH problem) gives rise to a solution of the nonlinear equation in question. The idea consists in direct checking that the solution of the RH problem, properly normalized, satisfies a system of differential equations (w.r.t.\ the \emph{outer parameters} $x$ (or $\hat x$) and $t$), which can be interpreted as the Lax pair for the nonlinear equation. For example, see \cite{FT} for the nonlinear Schr\"odinger equation. For equations of the Camassa--Holm type (and their short wave limits), the procedure is more involved; see, e.g., \cite{BS15} for the case of the short wave limit of the Degasperis--Procesi equation. For the SP equation, the following theorem holds.

\begin{thm}\label{thm:lax}
Let $\{r(k),\,k\in\D{R};\,\{k_j,\gamma_j\}_{1}^N\}$ be a data set such that the RH problem \eqref{jump-y}--\eqref{M-norm} it determines has a unique solution $\hat M(\hat x,t,k)$. Define $f_1$, $f_2$ by \eqref{rep}. Introduce $x(\hat x,t)$ and $\hat u(\hat x,t)$ as in \eqref{xofy}--\eqref{uofx}, and 
\begin{equation} \label{qw-hat}
\hat q(\hat x,t)\coloneqq\frac{1}{\alpha^2(\hat x,t)-\beta^2(\hat x,t)},\qquad\hat w(\hat x,t)\coloneqq\frac{2\alpha(\hat x,t)\beta(\hat x,t)}{\alpha^2(\hat x,t)-\beta^2(\hat x,t)},
\end{equation}
where 
\begin{equation}\label{M0-al}
\begin{pmatrix}
\alpha(\hat x,t)&\beta(\hat x,t)\\-\beta(\hat x,t) & \alpha(\hat x,t)\end{pmatrix}\coloneqq\hat M(\hat x,t,0).
\end{equation}
Then the following equations (between functions of $(\hat x,t)$) hold:
\begin{enumerate}[\rm(a)]
\item 
$x_{\hat x}=\dfrac{1}{\hat q}$;
\item
$\hat u_{\hat x}=\dfrac{\hat w}{\hat q}$;
\item
$\hat q_t=\hat u\hat q\hat w$.
\end{enumerate}
\end{thm}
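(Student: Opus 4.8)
The plan is to derive (a)--(c) from the differential equations that $\hat M$ satisfies in the outer parameters $\hat x$ and $t$, obtained by the standard dressing procedure. First I would absorb the $(\hat x,t)$-dependence of the data into the exponential $\eul^{-\hat Q}$ from \eqref{Q-hat}: setting $\Psi\coloneqq\hat M\eul^{-\hat Q}$, the jump relation \eqref{jump-y} becomes $\Psi_+=\Psi_-J_0$ with $J_0$ from \eqref{J0} now \emph{independent} of $\hat x$ and $t$, while a direct check shows that the residue conditions \eqref{res-M-hat} turn into $\Res_{k=k_j}\Psi^{(1)}=\ii\gamma_j\Psi^{(2)}(k_j)$ with $(\hat x,t)$-independent constants $\ii\gamma_j$ (and analogously at $\bar k_j$). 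Consequently the logarithmic derivatives $A\coloneqq\Psi_{\hat x}\Psi^{-1}$ and $B\coloneqq\Psi_t\Psi^{-1}$ have no jump across $\D R$ and no poles at the $k_j,\bar k_j$, so each is rational in $k$ with singularities confined to $k=0$ and $k=\infty$.

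Next I would pin down $A$ and $B$ by Liouville-type arguments, using $\hat Q_{\hat x}=\ii k\sigma_3$, $\hat Q_t=\tfrac{1}{4\ii k}\sigma_3$ and the normalization \eqref{M-norm}. Writing $A=\hat M_{\hat x}\hat M^{-1}-\ii k\,\hat M\sigma_3\hat M^{-1}$, the singular term $\ii k\sigma_3$ is regular at $k=0$ and $A+\ii k\sigma_3$ stays bounded at $\infty$, so $A=-\ii k\sigma_3+A_0(\hat x,t)$ is affine in $k$; this is equivalent to the spatial equation $\hat M_{\hat x}+\ii k[\sigma_3,\hat M]=A_0\hat M$, matching \eqref{comsys-0}. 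Likewise $B=\hat M_t\hat M^{-1}+\tfrac{\ii}{4k}\hat M\sigma_3\hat M^{-1}$ has a simple pole at $k=0$ and vanishes at $\infty$, whence $B=B_{-1}(\hat x,t)/k$ with $B_{-1}=\tfrac{\ii}{4}\hat M(\hat x,t,0)\,\sigma_3\,\hat M^{-1}(\hat x,t,0)$; equivalently $\hat M_t=\tfrac1k\bigl(B_{-1}\hat M-\tfrac{\ii}{4}\hat M\sigma_3\bigr)$.

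Then I would extract (a)--(c) by expanding these two equations at $k=0$. Set $N(k)\coloneqq\hat M^{-1}(\hat x,t,0)\hat M(\hat x,t,k)=I-\ii k F+\ord(k^2)$, where $F=\left(\begin{smallmatrix}f_1&f_2\\f_2&-f_1\end{smallmatrix}\right)$ by \eqref{rep}, and write $\hat M_0\coloneqq\hat M(\hat x,t,0)=\left(\begin{smallmatrix}\alpha&\beta\\-\beta&\alpha\end{smallmatrix}\right)$ as in \eqref{M0-al}, noting that $\det\hat M\equiv1$ forces $\alpha^2+\beta^2=1$. Substituting $\hat M=\hat M_0N$ into the spatial equation and reading off the coefficient of $k$ gives $F_{\hat x}=\hat M_0^{-1}[\sigma_3,\hat M_0]=2\beta\,\hat M_0^{-1}\sigma_1$; doing the same with the time equation gives $\hat M_{0,t}=\tfrac14\hat M_0[\sigma_3,F]$. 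A short matrix computation then unpacks the first identity into $f_{1,\hat x}=-2\beta^2$ and $f_{2,\hat x}=2\alpha\beta$, so that $x_{\hat x}=1+f_{1,\hat x}=\alpha^2-\beta^2=1/\hat q$ (a) and $\hat u_{\hat x}=f_{2,\hat x}=2\alpha\beta=\hat w/\hat q$ (b), by \eqref{qw-hat}. The second identity yields $\alpha_t=-\tfrac12\beta f_2$ and $\beta_t=\tfrac12\alpha f_2$, hence $(\alpha^2-\beta^2)_t=-2\alpha\beta f_2$; combined with $\hat q=(\alpha^2-\beta^2)^{-1}$ and $\hat u=f_2$ this is precisely $\hat q_t=\hat u\hat q\hat w$ (c).

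\textbf{Main obstacle.}
The delicate point is the regularity of $A$ and $B$ at the discrete spectrum: one must show that the simple poles of $\Psi$ at $k_j$ and $\bar k_j$ do not survive in $\Psi_{\hat x}\Psi^{-1}$ and $\Psi_t\Psi^{-1}$. This is exactly where the exponential $(\hat x,t)$-dependence built into \eqref{res-M-hat} is essential: after passing to $\Psi$ the residue relations acquire constant coefficients, and differentiating them in $\hat x$ or $t$ is consistent with the pole structure, so the principal parts cancel. A secondary matter requiring care is the analytic justification of the Liouville arguments---smoothness of $\hat M$ in the parameters, uniformity of the $k\to\infty$ expansion, and regularity together with invertibility of $\hat M$ at $k=0$---which rests on the unique solvability assumed in the hypothesis.
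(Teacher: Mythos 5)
Your proposal is correct and follows essentially the same route as the paper: you form $\Psi=\hat M\eul^{-\hat Q}$, use the constancy of the jump and residue data for $\Psi$ together with Liouville's theorem to identify $\Psi_{\hat x}\Psi^{-1}=-\ii k\sigma_3+W$ and $\Psi_t\Psi^{-1}=B_{-1}/k$, and then extract (a)--(c) from the $k\to 0$ expansion via \eqref{rep} and \eqref{M0-al}; your identities $f_{1,\hat x}=-2\beta^2=\alpha^2-\beta^2-1$, $f_{2,\hat x}=2\alpha\beta$, and $\hat M_{0,t}=\tfrac14\hat M_0[\sigma_3,F]$ coincide with the paper's. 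The only difference is cosmetic (you also spell out the cancellation of poles at the $k_j$, which the paper asserts without detail).
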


\begin{rem}
We have already noted (Remark~\ref{rem:sym}) that $\hat M$ satisfies the symmetries \eqref{M-sym}. The specific forms of the l.h.s.\ of \eqref{rep} and \eqref{M0-al} follow from these symmetries, and the functions $f_1(\hat x,t)$, $f_2(\hat x,t)$, $x(\hat x,t)$, $\hat u(\hat x,t)$, $\alpha(\hat x,t)$, and $\beta(\hat x,t)$ are all real-valued for the same reasons. Moreover, $\hat q>1$ because $\alpha^2+\beta^2=\det\hat M(0)=1$.
\end{rem}

\begin{proof}[Proof of Theorem~\ref{thm:lax}]
The proof of (a)-(c) is based on calculations of $\Psi_{\hat x}\Psi^{-1}$ and $\Psi_t\Psi^{-1}$ where
\[
\Psi(\hat x,t,k)\coloneqq\hat M(\hat x,t;k)\eul^{(-\ii k\hat x-\frac{t}{4\ii k})\sigma_3}.
\]
\begin{proof}[Proof of \emph{(a)-(b)}]
We consider $\Psi_{\hat x}\Psi^{-1}$. Starting from the expansion
\[
\hat M(\hat x,t,k)=I+M_1/\ii k+\ord(k^{-2}),\quad k\to\infty
\]
and denoting $W\coloneqq-\croch{M_1,\sigma_3}$, we get
\[
(\Psi_{\hat x}\Psi^{-1})(\hat x,t,k)=-\ii k\sigma_3+W(\hat x,t)+ \ord(k^{-1}),\quad k\to\infty.
\]
Moreover, $(\Psi_{\hat x}\Psi^{-1})(\hat x,t,k)+\ii k\sigma_3$ has neither jumps no singularities and is bounded in $k\in\D{C}$; hence, by Liouville's theorem,
\begin{equation}
(\Psi_{\hat x}\Psi^{-1})(\hat x,t,k)=-\ii k\sigma_3+W(\hat x,t).
\label{psi-y-lax}
\end{equation}
On the other hand, starting from the expansion
\[
\Psi(\hat x,t,k)=G_0(\hat x,t)\left(I-\ii kG_1(\hat x,t)+\ord(k^2)\right)\eul^{(-\ii k\hat x-\frac{t}{4\ii k})\sigma_3},\quad k\to 0,
\]
where, by \eqref{rep} and \eqref{M0-al})
\[
G_0\coloneqq\begin{pmatrix}
\alpha & \beta \\ -\beta & \alpha
\end{pmatrix},\qquad 
G_1\coloneqq\begin{pmatrix}
f_1 &f_2 \\f_2 & -f_1
\end{pmatrix},  
\]
we obtain
\[
\Psi_{\hat x}\Psi^{-1}=G_{0\hat x}G_0^{-1}-\ii kG_0(G_{1\hat x}+ \sigma_3)G_0^{-1}+\ord(k^2),\quad k\to 0.
\]
Comparing this with \eqref{psi-y-lax}, it follows that 
\[
G_{1\hat x}=-\sigma_3+G_0^{-1}\sigma_3G_0=\begin{pmatrix}\alpha^2-\beta^2-1&2\alpha\beta\\2\alpha\beta&\alpha^2-\beta^2-1\end{pmatrix},
\]
which, in terms of $f_1$, $f_2$, $\hat w$, and $\hat q$, reads
\begin{equation}
f_{1\hat x}=\frac{1}{\hat q}-1,\qquad f_{2\hat x}=\frac{\hat w}{\hat q}
\end{equation}
and thus (a) and (b) hold. By \eqref{xofy} and \eqref{uofx} we indeed have $x_{\hat x}=1+f_{1\hat x}$ and $\hat u_{\hat x}=f_{2\hat x}$.
\renewcommand{\qed}{}
\end{proof}
\begin{proof}[Proof of \emph{(c)}]
Now we consider $\Psi_t\Psi^{-1}$. On one hand,
\[
\Psi_t\Psi^{-1}=\ord(k^{-1}),\quad k\to\infty.
\]
On the other hand,
\[
\Psi_t\Psi^{-1}=-\frac{1}{4\ii k}G_0\sigma_3G_0^{-1}+\left\{G_{0t}+\frac{1}{4}G_0\croch{G_1,\sigma_3}\right\}G_0^{-1}, 
\quad k\to 0.
\]
Thus, by Liouville's theorem, 
\[
G_{0t}=-\frac{1}{4}G_0\croch{G_1,\sigma_3}=-\frac{1}{2}\begin{pmatrix}\beta f_2&-\alpha f_2\\\alpha f_2&\beta f_2\end{pmatrix},
\]
which, in terms of  $\hat u$, $\hat w$ and $\hat q$, reads $\hat q_t=\hat u\hat q\hat w$ and thus item (c) of Theorem \ref{thm:lax}
holds. 
\end{proof}
\renewcommand{\qed}{}
\end{proof}

\begin{cor}  \label{cor}
With the same assumptions and notations as in Theorem~\ref{thm:lax} we introduce
\[
u(x,t)\coloneqq\hat u(\hat x(x,t),t),\quad q(x,t)\coloneqq\hat q(\hat x(x,t),t).
\]
Then the three equations \emph{(a)--(c)} from Theorem~\ref{thm:lax} reduce to 
\begin{subequations}\label{cor-ab}
\begin{align}\label{cor-a}
&q_t=\frac{1}{2}(u^2q)_x,\\
\label{cor-b}
&q=\sqrt{1+u_x^2},
\end{align}
\end{subequations} 
which is the SP equation in the conservation law form.
\end{cor}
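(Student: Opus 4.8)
The plan is to show that the three relations (a)--(c) of Theorem~\ref{thm:lax}, which are stated as identities between functions of the independent variables $(\hat x,t)$, become the conservation-law form of the SP equation once we change from the variable $\hat x$ to the variable $x$ via the map $x=x(\hat x,t)$ from \eqref{xofy}. The starting point is that $u$ and $q$ as defined in the corollary are by construction the compositions $u(x,t)=\hat u(\hat x,t)$ and $q(x,t)=\hat q(\hat x,t)$ evaluated at $\hat x=\hat x(x,t)$, where $\hat x(x,t)$ is the inverse of $x\mapsto\hat x$. So the whole argument is an exercise in the chain rule, using (a) as the dictionary that relates $\hat x$-derivatives to $x$-derivatives.

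First I would establish \eqref{cor-b}. From (a), $x_{\hat x}=1/\hat q$, so the Jacobian of the change of variables is $\partial x/\partial\hat x=1/\hat q>0$ (recall $\hat q>1$), which confirms the map is a diffeomorphism and gives $\partial\hat x/\partial x=\hat q$. Then (b) reads $\hat u_{\hat x}=\hat w/\hat q$, and by the chain rule $u_x=\hat u_{\hat x}\,\hat x_x=(\hat w/\hat q)\cdot\hat q=\hat w$. Thus $w=u_x=\hat w$. Now I invoke the algebraic definitions \eqref{qw-hat}: a direct computation shows $\hat q^2-\hat q^2\hat w^2$, or more cleanly $\hat q^2(1+\hat w^2)$; using $\alpha^2+\beta^2=1$ from the determinant condition together with the formulas $\hat q=1/(\alpha^2-\beta^2)$ and $\hat q\hat w=2\alpha\beta/(\alpha^2-\beta^2)$, one finds $1+\hat w^2=\hat q^2(\alpha^2-\beta^2)^2\cdot\bigl[(\alpha^2-\beta^2)^2+(2\alpha\beta)^2\bigr]/\dots$; the key identity to extract is $(\alpha^2-\beta^2)^2+(2\alpha\beta)^2=(\alpha^2+\beta^2)^2=1$, whence $\hat q^2=1+\hat w^2$, i.e.\ $q=\sqrt{1+u_x^2}$. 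This is \eqref{cor-b}.

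Next I would derive \eqref{cor-a} from (c). Here the main subtlety, and what I expect to be the principal obstacle, is that the time derivative changes meaning under the change of variables: holding $\hat x$ fixed is not the same as holding $x$ fixed, because $x=x(\hat x,t)$ depends on $t$. So $\hat q_t$ (the derivative in (c), at fixed $\hat x$) must be related to $q_t$ and $q_x$ (derivatives at fixed $x$ and fixed $t$ respectively) through $\partial_t|_{\hat x}=\partial_t|_x+x_t\,\partial_x|_t$. The quantity $x_t$ at fixed $\hat x$ is not yet known, so I would first compute it. The natural route is to use the compatibility (equality of mixed partials) $x_{\hat x t}=x_{t\hat x}$ together with (a) and (c): differentiating $x_{\hat x}=1/\hat q$ in $t$ gives $x_{\hat x t}=-\hat q_t/\hat q^2=-\hat u\hat w/\hat q$ by (c), and integrating in $\hat x$ (using the decay of $u$ as $\hat x\to\pm\infty$) should yield $x_t=-\tfrac12\hat u^2$ up to the correct sign and constant. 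This is the step where care with boundary terms and the precise form of the integrand is essential.

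Finally, with $x_t=-\tfrac12 u^2$ in hand, I would assemble the pieces. Converting (c) via the chain rule, $\hat q_t=q_t+x_t q_x=q_t-\tfrac12 u^2 q_x$ on the one hand, and $\hat q_t=\hat u\hat q\hat w=u q u_x$ on the other, so $q_t=\tfrac12 u^2 q_x+u u_x q$. The right-hand side is exactly $\tfrac12\partial_x(u^2 q)=\tfrac12(2uu_x q+u^2 q_x)$, giving $q_t=\tfrac12(u^2q)_x$, which is \eqref{cor-a}. Combined with \eqref{cor-b} this is precisely the conservation-law form \eqref{cons-l} of the SP equation, completing the proof. The only genuinely delicate point is the correct determination of $x_t$ (its sign and the factor $\tfrac12$), and verifying that the boundary term at $\hat x\to+\infty$ vanishes, which follows from $u\to0$ and $q\to1$ there.
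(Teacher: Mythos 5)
Your proposal is correct and follows essentially the same route as the paper's proof: the identity $w=u_x$ via the chain rule and $(\alpha^2-\beta^2)^2+(2\alpha\beta)^2=(\alpha^2+\beta^2)^2=1$ for \eqref{cor-b}, then the rewriting of (c) as $(1/\hat q)_t=-\tfrac12(\hat u^2)_{\hat x}$ to obtain $x_t=-\tfrac12\hat u^2$ (you get it by integrating $x_{\hat x t}$, the paper by differentiating the integral representation of $x(\hat x,t)$ coming from (a) — the same computation), and finally the chain rule $\hat q_t=q_t+x_tq_x$ for \eqref{cor-a}. The delicate points you flag (sign and factor of $x_t$, vanishing of the boundary term at $\hat x\to+\infty$) are exactly the ones the paper handles, and your resolution of them is correct.
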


\begin{proof}
First, it follows from (a) that $\hat x_x(x,t)=q(x,t)$ and from (b) that $\hat u_{\hat x}(\hat x(x,t),t)=\frac{w}{q}(x,t)$, where $w(x,t)\coloneqq\hat w(\hat x(x,t),t)$. Hence, the identity $u_x(x,t)=\hat u_{\hat x}(\hat x(x,t),t)\hat x_x(x,t)$ gives
\begin{equation}\label{wux}
w=u_x.
\end{equation}
Thus, \eqref{cor-b} reads $q=\sqrt{1+w^2}$, or $\hat q=\sqrt{1+\hat w^2}$, which follows from the definitions \eqref{qw-hat} of $\hat q$ and $\hat w$.

In order to derive \eqref{cor-a}, we first notice that (c) can be written in the conservation law form 
\begin{equation}
\left(\frac{1}{\hat q}\right)_t= -\frac{1}{2}\left(\hat u^2\right)_{\hat x}.
\label{cons-hat}
\end{equation}
Indeed, 
\[
\left(\frac{1}{\hat q}\right)_t=-\frac{\hat q_t}{\hat q^2} 
	= -\frac{\hat u \hat w}{\hat q} = -\hat u \hat u_{\hat x}
	= -\frac{1}{2}\left(\hat u^2\right)_{\hat x},
\]
where (c) and then (b) have been used. Now, we calculate $x_t(\hat x, t)$ starting from (a), then using \eqref{cons-hat}:
\[
x_t(\hat x,t) =-\frac{\partial}{\partial t}\left(
\int_{\hat x}^\infty \frac{\dd\xi}{\hat q(\xi,t)}\right)
	= \frac{1}{2}\int_{\hat x}^\infty
	\left(\hat u^2\right)_\xi(\xi,t) \dd\xi 
	= -\frac{1}{2}\hat u^2(\hat x,t).
\]
Substituting this into the identity $\hat q_t=q_xx_t+q_t$ (between functions of $(\hat x,t)$) and using (c) gives $q_t=\hat u\hat q\hat w+\frac{1}{2}q_x\hat u^2$, which reads $q_t=uqw+\frac{1}{2}q_xu^2$ in terms of functions of $(x,t)$. Using \eqref{wux} yields \eqref{cor-a}:
\[
q_t=uqu_x+\frac{1}{2}q_xu^2=\frac{1}{2}(u^2q)_x.\qedhere
\]
\end{proof}

\section{Solitons}\label{sec:4}
	
In the general case, solving a Riemann--Hilbert problem reduces to solving a coupled system of integral equations (generated by the jump condition) and algebraic equations (generated by the residue conditions). In this framework, pure soliton solutions arise in the case where the jump condition is trivial ($J\equiv I$) and thus the solution of the RH problem, being a rational function of the spectral parameter, reduces to solving a system of linear algebraic equations only. The dimension of this system is determined by the number of poles.

In the case of the SP equation, it is natural to distinguish between the solutions associated with pure imaginary zeros of $a(k)$ and those associated with zeros $k_j$ with a nonzero real part.

\subsection{}

First, consider the case, where $a(k)$ has a single, pure imaginary zero at $k=\ii\nu$, $\nu>0$. Then $\hat M$ has two simple poles: one at $k=\ii\nu$ and the second one at $k=-\ii\nu$. From the normalization condition it follows that $\hat M$ has the form
\[
\hat M=\begin{pmatrix}\frac{k-B_{11}}{k-\ii\nu}&\frac{B_{12}}{k+\ii\nu}\\[2mm]
\frac{B_{21}}{k-\ii\nu}&\frac{k-B_{22}}{k+\ii\nu}\end{pmatrix},
\]
where $B_{ij}$ are functions of $\hat x$ and $t$ to be determined from the residue conditions.

Using the symmetry condition $\overline{\hat M}(-\overline{k})=\hat M(k) $ we conclude that $B_{ij}=-\overline{B}_{ij}$, $i,j\in\{1,2\}$ whereas the symmetry $\hat M(-k)=\left(\begin{smallmatrix} 0&1\\-1&0\end{smallmatrix}\right)\hat M(k)\left(\begin{smallmatrix} 0&-1\\1&0\end{smallmatrix}\right)$ implies $B_{11}=-B_{22}$ and $B_{12}=B_{21}$.

Introducing the real-valued functions $b_1$ and $b_2$ by $B_{11}=-B_{22}=\ii b_1$ and $B_{12}=B_{21}=\ii b_2$, $M$ can be written as 
\begin{equation}\label{m-sol}
\hat M=\begin{pmatrix}
\frac{k-\ii b_1}{k-\ii\nu}&\frac{\ii b_2}{k+\ii\nu}\\[2mm] 
\frac{\ii b_2}{k-\ii\nu}&\frac{k+\ii b_1}{k+\ii\nu}\end{pmatrix}.
\end{equation}
Denoting $e_1\coloneqq\eul^{-2\nu\hat{x}-\frac{t}{2\nu}+\log\abs{\gamma}}$ and taking into account that in this case $\gamma\in\D{R}$, the residue conditions \eqref{res-M-hat} take the form 
\begin{equation}\label{res-sol}
\Res_{k=\ii\nu}\hat M^{(1)}=\ii\,\sign(\gamma)e_1\hat M^{(2)}(\ii\nu).
\end{equation}
Taking into account \eqref{m-sol}, these conditions lead to the system of equations for $b_1$ and $b_2$:
\[
\begin{cases} 
\ii\nu-\ii b_1=\ii\,\sign(\gamma)e_1\frac{b_2}{2\nu}&\\
\ii b_2=\ii\,\sign(\gamma)e_1\frac{\nu+b_1}{2\nu}&
\end{cases}
\]
from which $b_1$ and $b_2$ can be determined as follows:
\begin{equation}\label{res1}
b_1=\frac{\nu(4\nu^2-e_1^2)}{4\nu^2+e_1^2},
\qquad  b_2=\frac{4\nu^2\,\sign(\gamma)e_1}{4\nu^2+e_1^2}.
\end{equation}

Thus we have solved the RH problem. In accordance with Theorem \ref{thm:main}, the expansion of $\hat M(\,\cdot\,,\,\cdot\,,k)$ at $k=0$ gives
\[
\hat M(k)=\begin{pmatrix} 
\frac{b_1}{\nu}+\ii k\frac{\nu-b_1}{\nu^2} & \frac{b_2}{\nu}+\ii k\frac{b_2}{\nu^2} \\[2mm] 
-\frac{b_2}{\nu}+\ii k\frac{b_2}{\nu^2} & \frac{b_1}{\nu}+\ii k\frac{b_1-\nu}{\nu^2}\end{pmatrix}+\ord(k^2).
\]
Particularly, 
\[
\hat M(0)=\begin{pmatrix} \frac{b_1}{\nu} & \frac{b_2}{\nu} \\[2mm] \frac{-b_2}{\nu} & \frac{b_1}{\nu}\end{pmatrix}
\]
and thus 
\[
{\hat M(0)}^{-1}\hat M(k)=I-\ii k\begin{pmatrix} \frac{1}{\nu}-\frac{b_1}{b_1^2+b_2^2} & -\frac{b_2}{b_1^2+b_2^2} \\[2mm] -\frac{b_2}{b_1^2+b_2^2} & \frac{b_1}{b_1^2+b_2^2}-\frac{1}{\nu}\end{pmatrix}+\ord(k^2).
\]
Finally, using \eqref{eq:main}, we arrive at 

\begin{thm}[one-soliton]\label{1sol}
One-soliton solutions $u(x,t)$ of the SP equation \eqref{spe} can be expressed, in parametric form, as follows:
\[
u(x,t)=\hat u(\hat x(x,t),t),
\]
where
\begin{equation}\label{sol-1}
\begin{split}
\hat u(\hat x,t)&=-\frac{4\,\sign(\gamma)e_1(\hat x,t)}{4\nu^2 +e_1^2(\hat x,t)},\\
x(\hat x,t)&=\hat x+\frac{2}{\nu}\,\frac{e_1^2(\hat x,t)}{4\nu^2+e_1^2(\hat x,t)},
\end{split}
\end{equation}
with
\[
e_1(\hat x,t)\coloneqq\eul^{-2\nu\hat{x}-\frac{t}{2\nu}+\log\abs{\gamma}}.
\]
Here, $\nu>0$ and $\gamma\in\D{R}$ are the soliton parameters.
\end{thm}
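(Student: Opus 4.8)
The plan is to feed the pieces already assembled above directly into the representation formula of Theorem~\ref{thm:main}; by the time we reach the statement, essentially all the genuine work has been done. The normalization and symmetry constraints have forced $\hat M$ into the explicit rational form \eqref{m-sol}, the residue condition \eqref{res-sol} has pinned down $b_1$ and $b_2$ as in \eqref{res1}, and the expansion of $\hat M(0)^{-1}\hat M(k)$ at $k=0$ has been computed just before the theorem. So the first step is merely to apply \eqref{rep}: dividing that expansion by $\ii k$ and letting $k\to 0$ reads off the matrix $\left(\begin{smallmatrix} f_1&f_2\\f_2&-f_1\end{smallmatrix}\right)$, giving
\[
f_1=\frac{1}{\nu}-\frac{b_1}{b_1^2+b_2^2},\qquad f_2=-\frac{b_2}{b_1^2+b_2^2}.
\]

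The second step --- the only one requiring any computation --- is an algebraic simplification resting on the observation that $b_1^2+b_2^2=\nu^2$. This can be seen conceptually from the constraint $\det\hat M(0)=1$, which for the form \eqref{m-sol} reads $(b_1^2+b_2^2)/\nu^2=1$; alternatively it follows by direct substitution of \eqref{res1}, using the identity $(4\nu^2-e_1^2)^2+16\nu^2 e_1^2=(4\nu^2+e_1^2)^2$ to collapse numerator and denominator. With $b_1^2+b_2^2=\nu^2$ in hand the denominators in $f_1$ and $f_2$ become $\nu^2$, and inserting the explicit values of $b_1$, $b_2$ yields
\[
f_2=-\frac{b_2}{\nu^2}=-\frac{4\,\sign(\gamma)e_1}{4\nu^2+e_1^2},\qquad
f_1=\frac{1}{\nu}-\frac{b_1}{\nu^2}=\frac{2e_1^2}{\nu(4\nu^2+e_1^2)}.
\]

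Finally I would read the parametric representation off \eqref{xofy}--\eqref{uofx}: the identities $\hat u=f_2$ and $x=\hat x+f_1$ are exactly the formulas \eqref{sol-1}, with $e_1(\hat x,t)=\eul^{-2\nu\hat x-\frac{t}{2\nu}+\log\abs{\gamma}}$ carried over from the definition above \eqref{res-sol}, which completes the derivation. There is no deep obstacle here: the content of the theorem is that solving the RH problem with trivial jump and a single conjugate pair of poles produces, through Theorem~\ref{thm:main}, a genuine solution of the SP equation, and the remaining task is purely bookkeeping. The one place to be slightly careful is the sign accounting in the simplification step, where $\sign(\gamma)^2=1$ is what removes the sign ambiguity from $b_2^2$ while leaving a single factor of $\sign(\gamma)$ surviving in $f_2=\hat u$.
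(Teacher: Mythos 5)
Your proposal is correct and follows exactly the route the paper takes: the paper's ``proof'' of Theorem~\ref{1sol} is precisely the derivation preceding it (the rational ansatz \eqref{m-sol}, the residue system yielding \eqref{res1}, and the expansion of $\hat M(0)^{-1}\hat M(k)$), followed by an unstated application of \eqref{eq:main}. You merely make explicit the final bookkeeping the paper omits, and your observation that $b_1^2+b_2^2=\nu^2$ is forced by $\det\hat M(0)=1$ is a correct and tidy way to organize that last simplification.
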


Introducing 
\[
\phi(\hat x,t)=2\nu\left(\hat x+\frac{t}{4\nu^2}-y_0\right),\qquad y_0=\frac{1}{2\nu}\log\frac{|\gamma|}{2\nu}\,,
\]
the soliton formulas \eqref{sol-1} can be written as 
\begin{equation}\label{sol-1-1}
\begin{split}
\hat u(\hat x,t)&=-\frac{\sign\gamma}{\nu}\frac{1}{\cosh\phi(\hat x,t)},\\
x(\hat x,t)&=\hat x+\frac{1}{\nu}\left(1-\tanh\phi(\hat x,t)\right).
\end{split}
\end{equation}

We notice that the one-soliton solution described above is always a multivalued function having the form of a loop. Indeed, $\frac{\partial x}{\partial\hat x}=2\tanh^2 \phi-1$, which changes sign twice as $\phi$ is varying from $-\infty$ to $+\infty$.

For $\nu=\frac{1}{2}$ and $\gamma=-1$, \eqref{sol-1-1} reads
\begin{equation}\label{22}
\begin{split}
\hat u(\hat x,t)&=\frac{2}{\cosh (\hat x+t)},\\
x(\hat x,t)&=\hat x-2\tanh(\hat x+t)+2,
\end{split}
\end{equation}
and thus we retrieve the formulas for the soliton solution presented in \cite{ss06} (comparing with \cite{ss06}, the additional constant in \eqref{22} provides that $x-\hat x\to 0$ as $\hat x\to +\infty$, cf.~\eqref{x-hat}), where they were obtained using the connection between the short pulse equation and the sine-Gordon equation.

\subsection{}

Now consider the case, where $a(k)$ has a pair of zeros: $a(k_0)=0=a(-\bar k_0)$ with
\[
k_0=\mu+\ii\nu,\quad \mu>0,\,\nu>0.
\]
In this case, the  symmetries \eqref{M-sym} and the normalization condition lead to 
\begin{equation}\label{m-sol-2}
\hat M=\frac{1}{2}
\begin{pmatrix} 
\frac{k-b_1}{k-k_0}+\frac{k+\bar b_1}{k+\bar k_0}
&\frac{b_2}{k-\bar k_0}-\frac{\bar b_2}{k+ k_0} \\[2mm] 
-\frac{\bar b_2}{k- k_0}+\frac{b_2}{k+ \bar k_0}
&\frac{k-\bar b_1}{k-\bar k_0}+\frac{k+ b_1}{k+ k_0}\end{pmatrix}
\end{equation}
(cf.~\eqref{m-sol}), where $b_1(\hat x,t)$ and $b_2(\hat x,t)$ can be found solving the system of linear equations resulting from the residue conditions \eqref{res-M-hat} at $k=k_0$:
\begin{equation}\label{res-M-hat-br}
\Res_{k=k_0}\hat M^{(1)}(\hat x,t,k)=\ii\gamma_0\eul^{2\ii \bigl(k_0\hat{x}-\frac{t}{4k_0}\bigr)}\hat M^{(2)}(\hat x,t,k_0) 
\end{equation}
(the other residue conditions at $-\bar k_0$, $-k_0$, and $\bar k_0$ then follow from the symmetry condition).

Introducing $\gamma=\abs{\gamma}\eul^{\ii\arg\gamma}$ and writing $\gamma_0\eul^{2\ii\bigl(k_0\hat{x}-\frac{t}{4k_0}\bigr)}$ in \eqref{res-M-hat-br} as
\[
\gamma_0\eul^{2\ii\bigl(k_0\hat{x}-\frac{t}{4k_0}\bigr)}=\eul^{-\frac{\nu}{\abs{k_0}}\left(2\abs{k_0}\hat x+\frac{t}{2\abs{k_0}}-\frac{\abs{k_0}\log\abs{\gamma}}{\nu}\right)}\eul^{\frac{\ii\mu}{\abs{k_0}}\left(2\abs{k_0}\hat x-\frac{t}{2\abs{k_0}}+\frac{\abs{k_0}\arg\gamma}{\mu}
\right)}
\]
suggest introducing 
\begin{align*}
\phi&=\frac{\nu}{\sqrt{\nu^2+\mu^2}}\left(2\abs{k_0}\hat x+\frac{t}{2\abs{k_0}}-\frac{\abs{k_0}\log\abs{\gamma}}{\nu}\right),\\
\psi&=\frac{\mu}{\sqrt{\nu^2+\mu^2}}\left(2\abs{k_0}\hat x-\frac{t}{2\abs{k_0}}+\frac{\abs{k_0}\arg\gamma}{\mu}\right),
\end{align*}
in terms of which the solution of the SP equation (after solving \eqref{m-sol-2}, \eqref{res-M-hat-br} for $b_1$ and $b_2$) is given by (see \cite{LPS09,ss06})
\begin{subequations} \label{sol-2}
\begin{align}\label{sol-2-u}
&\hat u(\hat x,t)=\frac{2\mu\nu}{\nu^2+\mu^2}\frac{\nu\sin\psi\sinh\phi+\mu\cos\psi\cosh\phi}{\nu^2\sin^2\psi+\mu^2\cosh^2\phi},\\ 
\label{sol-2-x}
&x(\hat x,t)=\hat x+\frac{\mu\nu}{\nu^2+\mu^2}\left(\frac{\nu\sin(2\psi)-\mu\sinh(2\phi)}{\nu^2\sin^2\psi+\mu^2\cosh^2\phi}+\frac{2}{\mu}\right).
\end{align}
\end{subequations}
Observing that (see \cite{LPS09})
\[
\frac{\partial x}{\partial\hat x}=1-\frac{8\mu^2\nu^2\sin^2\psi\cosh^2\phi}{(\nu^2+\mu^2)(\nu^2\sin^2\psi+\mu^2\cosh^2\phi)^2}=\cos\left(4\arctan\frac{\nu\sin\psi}{\mu\cosh\phi}\right),
\]
we see that if $\left|\frac{\nu}{\mu}\right|<\tan\frac{\pi}{8}$, then $\frac{\partial x}{\partial\hat x}>0$ for all $x$, and thus \eqref{sol-2} represents a smooth solution -- the breather. On the other hand, if $\left|\frac{\nu}{\mu}\right|>\tan \frac{\pi}{8}$, then $\frac{\partial x}{\partial \hat x}$ is not sign-definite, and thus \eqref{sol-2} represents a multivalued solution.

\section {Long-time asymptotics}   \label{sec:5}

A major advantage of the representation of the solution $u$ to the Cauchy problem for a nonlinear integrable equation in terms of the solution of an associated Riemann--Hilbert problem is that it can be efficiently used for studying \emph{in details} the long-time behavior of the former problem via the long-time analysis of the latter, applying the nonlinear steepest descent method introduced by Deift and Zhou \cite{DZ93}. For Camassa--Holm-type equations, this approach has been presented in \cite{BKST09,BS-D,BS13,BS15}. A key feature of this method is the deformation of the original RH problem according to the ``signature table'' for the phase function $\theta$ in the jump matrix $\hat J$ written in the form (cf.~\eqref{J-J0}, \eqref{Q-hat})
\begin{equation}\label{jump-theta}
\hat J(\hat x,t;k)=\eul^{-\ii t\theta(\hat\zeta, k)\sigma_3}J_0(k)\eul^{\ii t\theta(\hat\zeta,k)\sigma_3},
\end{equation}
where 
\begin{align}  \label{theta}
&\theta(\hat\zeta,k)=\hat\zeta k-\frac{1}{4k}\,,\\
\label{hatzeta}
&\hat\zeta\coloneqq\frac{\hat x}{t}\,.
\end{align}
The signature table is the distribution of signs of $\Im\theta(\hat\zeta,k)$ in the $k$-plane, depending on the values of $\hat\zeta$. In the case of the SP equation,
\[
\Im\theta(\hat\zeta,k)=\Im k\cdot\Bigl(\hat\zeta+\frac{1}{4|k|^2}\Bigr).
\]

Now we notice that $\hat J(\hat x,t;k)$ in the present case looks very similar to the case of the SW equation, see \cite{BSZ11}, including the matrix structure of $\hat J_0(k)$ and the form of $\theta(\hat\zeta,k)$. Namely, the latter in the case of the SW equation has the form $\theta(\hat\zeta,k)=\hat\zeta k-\frac{1}{2k}$ and thus the distribution of signs of $\Im\theta(\hat\zeta,k)$ is the same modulo the scaling factor $\frac{1}{2}$. As for the structure of $\hat J_0(k)$, the difference with the case of the SW equation is that $\bar r$ is to be replaced by $-\bar r$ while keeping $r$ the same. A direct consequence of this is that the long time analysis in the case of the SP equation repeats the steps made in the case of the SW equation. As for the differences, we notice the following.
\begin{enumerate}[1)]
\item 
The basic difference is that in the case of the SP equation, the RH problem involves, in general, residue conditions (absent in the case of the SW equation). Here there is a complete analogy with the NLS equation \cite{FT}, where the defocusing NLS equation corresponds to the SW equation whereas the focusing NLS equation corresponds to the SP equation. Consequently, if $a(k)$ has zeros, then the solitons associated with the residue conditions dominate the long time behavior of the solution of the Cauchy problem. 
\item
In the solitonless case ($a(k)\neq 0 $ for all $k$ with $\Im k\geq 0$), the main asymptotic term is expressed in terms of the solution of the model RH problem, which is different from that in the SW case. More precisely, the model problem for the SW equation is exactly as in \cite[Appendix B]{L-lr} whereas in the case of the SP equation, the jump matrix of the model problem is as in \cite[(B.1)]{L-lr}, with $\bar q$ replaced by $-\bar q$ (keeping $q$ the same). Accordingly, the large-$z$ expansion of the solution of the model problem, which is 
\[
m^X(q,\hat k)=I+\frac{\ii}{\hat k}
\begin{pmatrix}
0 & -\beta^X(q) \\ 
\overline{\beta^X(q)} & 0
\end{pmatrix}+\ord(\hat k^{-2}),\qquad\hat k\to\infty
\]
from \cite[(B.2)]{L-lr}, is to be replaced by 
\begin{equation}\label{mod-as}
m^X(q,\hat k)=I+\frac{\ii}{\hat k}
\begin{pmatrix}
0 & -\beta^X(q)\\ -\overline{\beta^X(q)} & 0
\end{pmatrix}+\ord(\hat k^{-2}),\qquad\hat k\to\infty
\end{equation}
in the case of the SP equation, with 
\begin{equation}\label{be-x}
\beta^X(q)=\sqrt{-h(q)}\,\eul^{\ii\bigl(\frac{\pi}{4}-\arg q+\arg \Gamma(\ii\nu(q))\bigr)}
\end{equation}
(notice that $h(q)<0$ in this case), where $\Gamma$ is the Euler Gamma function.
\end{enumerate}

Now we are going to give a sketch of the asymptotic analysis in the solitonless case and present an exact asymptotic result. Since the distribution of signs of $\Im\theta(\hat\zeta,k)$ is as in the case of the SW equation, the long time behavior of $u$ is qualitatively different in the same two ranges of values of 
\begin{equation}  \label{zeta}
\zeta\coloneqq\frac{x}{t}\,.
\end{equation}

\subsection{Range $\BS{\zeta>\varepsilon}$}

In this case the set $\{k\mid\Im\theta(\hat\zeta,k)=0\}$ coincides with the real axis $\Im k=0$ and $\pm\Im\theta>0$ for $\pm\Im k>0$. This suggests the use of the following  factorization of the jump matrix for all $k\in\D{R}$:
\begin{equation}\label{factor1}
\hat J=\begin{pmatrix}
1 & \bar r(\bar k)\eul^{-2\ii t \theta} \\ 
0 & 1 
\end{pmatrix}
 \begin{pmatrix}
1 & 0 \\
  r(k)\eul^{2\ii t \theta} & 1 
\end{pmatrix}.
\end{equation}
Indeed, the triangular factors in \eqref{factor1} can be absorbed into a new RH problem for $\hat M^{(1)}(\hat x,t,k)$ in  the same way as in the case  of the SW equation \cite{BSZ11}:
\[
\hat M^{(1)}=\begin{cases}
\hat M\begin{pmatrix}
1 & 0 \\
 - r(k)\eul^{2\ii t \theta} & 1
\end{pmatrix}, & 0<\Im k<\varepsilon, \\
\hat M\begin{pmatrix}
1 & \bar r(\bar k)\eul^{-2\ii t \theta} \\ 
0 & 1 
\end{pmatrix}, & -\varepsilon<\Im k <0, \\
\hat M, & \text{otherwise}.
\end{cases}
\]
This reduces the RH problem to a RH problem with a jump matrix that decays exponentially (in $t$) to the identity matrix. Since this RH problem is holomorphic (there is no residue condition), its solution decays fast to $I$ and consequently $\hat u(\hat x,t)$ decays fast to $0$ while $\hat x$ approaches fast $x$, and thus the domains $\hat\zeta>\varepsilon$ and $\zeta>\varepsilon$ coincide asymptotically.

\subsection{Range $\BS{\zeta<-\varepsilon}$}

Similarly to \cite{BSZ11}, in a domain of the form $\hat\zeta<-\varepsilon$ for any $\varepsilon>0$, the signature table dictates the use of two factorizations. Let $\pm\hat\kappa$ be the points where the distribution of signs is changing:
\begin{equation}\label{kap}
\hat\kappa=\frac{1}{2\sqrt{\abs{\hat\zeta}}}\,.
\end{equation}
\begin{enumerate}[i)]
\item
For $k\in (-\hat\kappa,\hat\kappa)$ we consider again the factorization \eqref{factor1}
\[
\hat J=\begin{pmatrix}
1 & \bar r(\bar k)\eul^{-2\ii t\theta} \\ 
0 & 1 
\end{pmatrix}
 \begin{pmatrix}
1 & 0 \\
  r(k)\eul^{2\ii t\theta} & 1 
\end{pmatrix}.
\]
\item
For $k\in(-\infty,-\hat\kappa)\cup(\hat\kappa,\infty)$ we consider a factorization with triangular factors in reverse order:
\begin{equation}\label{factor2}
\hat J=\begin{pmatrix}
1 & 0 \\
\frac{r(k)}{1-\abs{r(k)}^2}\eul^{2\ii t\theta} & 1
\end{pmatrix}
 \begin{pmatrix}
1-\abs{r(k)}^2 & 0 \\
0 & \frac{1}{1-\abs{r(k)}^2}
\end{pmatrix}
\begin{pmatrix}
1 & \frac{\bar r(\bar k)}{1-\abs{r(k)}^2}\eul^{-2\ii t\theta} \\
0 & 1 
\end{pmatrix}
\end{equation}
\end{enumerate}
Similarly to the previous case, an appropriate sequence of deformations of the RH problem is the same as in the case of the SW equation, so we will follow it giving details mainly for items specific to the considered equation.

The deformations involve the removal of the diagonal factor in \eqref{factor2} and the consequent absorption of the triangular factors, leading, after an appropriate rescaling, to a model RH problem on a contour consisting of two crosses centered at $k=\pm\hat\kappa$, see \cite{BKST09,BS-D}, which finally leads to the asymptotics in the form of modulated decaying (of the order $\ord(t^{-1/2})$) oscillations. The diagonal term is removed introducing $\hat M^{(1)}=\hat M\delta^{-\sigma_3}$, where
\begin{equation}\label{delta}
\delta(k;\hat\zeta)=\exp\left\{\frac{1}{2\pi\ii}\left(\int_{-\infty}^{-\hat\kappa}+\int_{\hat\kappa}^{\infty}\right)\log(1+\abs{r(s)}^2)\frac{\dd s}{s-k}\right\}.
\end{equation}
solves the scalar RH problem whose jump condition is
\[
\delta_+=\delta_-(1+\abs{r(k)}^2)
\]
across the contour $(-\infty,-\hat\kappa)\cup(\hat\kappa,\infty)$. 

The triangular factors are absorbed into the RH problem for $\hat M^{(2)}$:
\begin{equation}\label{mu2}
\hat M^{(2)}=\begin{cases}
\hat M^{(1)}\begin{pmatrix}
1 & 0 \\
-r\delta^{-2}\eul^{2\ii t\theta} & 1
\end{pmatrix},&\Im k>0,\ k\text{ near }(-\hat\kappa,\hat\kappa),\\
\hat M^{(1)}\begin{pmatrix}
1 & -\frac{\bar r}{1-|r|^2}\delta_+^2 \eul^{-2\ii t\theta} \\
0 & 1\end{pmatrix},&\Im k>0,\ k\text{ near }\D{R}\setminus[-\hat\kappa,\hat\kappa],\\
\hat M^{(1)}\begin{pmatrix}
1 & \bar r\delta^{2}\eul^{-2\ii t\theta} \\
0 & 1
\end{pmatrix},&\Im k<0,\ k\text{ near }(-\hat\kappa,\hat\kappa),\\
\hat M^{(1)}\begin{pmatrix}
1 & 0 \\
\frac{r}{1-|r|^2}\delta_-^{-2}\eul^{2\ii t\theta}  & 1
\end{pmatrix},&\Im k<0,\ k\text{ near }\D{R}\setminus[-\hat\kappa,\hat\kappa].
\end{cases}
\end{equation}

Now, in order to reduce the RH problem for $\hat M^{(2)}$, as $t\to\infty$, to a model problem whose solution can be given explicitly in terms of parabolic cylinder functions, see \cite{DZ93,BS-D,L-lr,L16}, the leading term of the factor $\delta(k)\eul^{-\ii t \theta(k)}$ as $k\to\pm\hat\kappa$ is to be evaluated. One has
\begin{equation}\label{delta1}
\delta(k)=\left(\frac{\hat\kappa-k}{\hat\kappa+k}\right)^{-\ii h}\eul^{\chi(k)}
\end{equation}
with
\begin{align}\label{h}
&h\equiv h(\hat\kappa)=-\frac{1}{2\pi}\log\bigl(1+\abs{r(\hat\kappa)}^2\bigr),\\
\label{chi}
&\chi(k)=\frac{1}{2\pi\ii}\int_{\D{R}\setminus\croch{-\hat\kappa,\hat\kappa}}
\frac{\log(1+\abs{r(s)}^2)}{\log(1+\abs{r(\hat\kappa)}^2)} \frac{\dd s}{s-k}.
\end{align}
As $k\to -\hat\kappa$,
\begin{equation}\label{theta1}
\theta(k)=\frac{1}{2\hat\kappa} +\frac{1}{4\hat\kappa^3}(k+\hat\kappa)^2+\ord((k+\hat\kappa)^3).
\end{equation}
Therefore, introducing the scaled spectral variable $\hat k$ by
\begin{equation}\label{k-hat}
k+\hat\kappa=\frac{\hat k}{\sqrt{\hat\kappa^{-3}t}},
\end{equation}
the factor $\delta^2(k)\eul^{-2\ii t\theta(k)}$ can be approximated as 
\begin{equation}\label{ap}
\delta^2(k)\eul^{-2\ii t \theta(k)} \approx \tilde\delta^2 \hat k^{2\ii h}\eul^{-\ii\hat k ^2/2},
\end{equation}
where 
\begin{equation}\label{delta-t}
\tilde\delta^2=\left(\frac{4t}{\hat\kappa}\right)^{-\ii h}\eul^{-\frac{\ii t}{\hat\kappa}}\eul^{-2\chi(\hat\kappa)}.
\end{equation}
Similarly for $k$ near $\hat\kappa$.

Following \cite{L-lr}, the solution of the RH problem for $\hat M^{(2)}$ formulated on two crosses centered at $k=\pm\hat\kappa$ with the jump matrix $\hat J^{(2)}=(\hat M_-^{(2)})^{-1}\hat M_+^{(2)}$ that follows from \eqref{mu2}, can be approximated, for large $t$, in terms of the solution $m^X$ of the model problem formulated in the $\hat k$-plane on a cross centered at $\hat k=0$ and evaluated  for large $\hat k$. In our case, the evaluation of the model problem is given by \eqref{mod-as}, \eqref{be-x} with $q=r(-\hat k)$ and $h$ as in \eqref{h}. Taking into account that we are interested in the expansion of $\hat M^{(2)}(k)$ as $k\to 0$, a reasoning similar to \cite[Eqs~(2.36)--(2-39)]{L-lr} leads to an approximation for $\hat M^{(2)}(k)$ in terms of $m^X$:
\begin{align}\label{M-hat-exp}
\hat M^{(2)}(k)&=I+\frac{1}{\pi}\Im\int_{|k+\hat\kappa|=\rho}(m_0^{-1}(s)-I) \frac{\dd s}{s}\notag\\
&\quad+\frac{k}{\ii\pi}\Re\int_{|k+\hat \kappa|=\rho}(m_0^{-1}(s)-I)\frac{\dd s}{s^2}+\ord(k^2t^{-1/2-\varepsilon})
\end{align}
with some $\rho>0$ and $\varepsilon>0$, where
\[
m_0(k)=\tilde\delta^{\sigma_3}m^X\left(\hat\zeta,\sqrt{\frac{t}{\hat \kappa^3}}(k+\hat\kappa)\right)\tilde\delta^{-\sigma_3}.
\]
In view of \eqref{mod-as} and \eqref{k-hat}, and for large $t$,
\begin{equation}\label{m0-exp}
m_0(k)=I+\frac{\ii\sqrt{\hat\kappa^3}}{\sqrt{t}(k+\hat\kappa)}
\begin{pmatrix}
0 & -\beta^X\tilde \delta^2 \\ 
-\bar \beta^X\tilde \delta^{-2} & 0
\end{pmatrix}+\ord(t^{-1/2-\varepsilon}).
\end{equation}

Now recall that $\hat M=\hat M^{(1)}\delta^{\sigma_3}$ and that $\hat M^{(1)}$ is related to $\hat M^{(2)}$ by \eqref{mu2}, where $r(k)=\ord(k^2)$. Evaluating $\delta(k;\hat\zeta)$ as $k\to 0$ gives
\[
\delta(k;\hat\zeta)=1-\ii k Q + \ord(k^2),
\]
where 
\[
Q=\frac{1}{\pi}\int_{\hat\kappa}^\infty \frac{\log(1+\abs{r(s)}^2)}{s^2}\,\dd s.
\]
Taking this into account and substituting \eqref{m0-exp} into \eqref{M-hat-exp}, one obtains
\[
\hat M(k)=I + \frac{c_1}{\sqrt{t}}\begin{pmatrix}
0 & 1 \\ -1 & 0
\end{pmatrix}+\ii k\left(-Q \sigma_3 +\frac{Qc_1+c_2}{\sqrt{t}}
\begin{pmatrix}
0 & 1 \\ 1 & 0
\end{pmatrix}
\right)+\ord(k^2 t^{-1/2-\varepsilon}),
\]
where 
\[
c_1=2\sqrt{\hat\kappa}\Im \{\beta^X\tilde\delta^2\},
\qquad 
c_2=\frac{2}{\sqrt{\hat\kappa}}\Re \{\beta^X\tilde\delta^2\}.
\]
Finally, using \eqref{xofy}--\eqref{rep} one arrives at the asymptotic formulas
\[
\hat u=\frac{c_2}{\sqrt{t}}(1+\osmall(1)),\qquad 
x-\hat x=Q(1+\osmall(1)), \qquad t\to\infty,
\]
which imply the asymptotics for $u$ in the original variables:
\begin{equation}\label{u-ass}
u(x,t)=2\sqrt{\frac{|h(\varkappa)|}{\varkappa\,t}}\cos\left\{\frac{t}{\varkappa}+h(\varkappa)\log t+\phi_0(\varkappa)\right\},
\end{equation}
where
\begin{equation}\label{varkappa}
\varkappa=\frac{1}{2}\sqrt\frac{t}{\abs{x}}
\end{equation}
and
\begin{align}\label{phi0}
\phi_0(\varkappa)&=-\frac{\pi}{4}-\arg(r(\varkappa))-\arg\Gamma(\ii h(\varkappa))+\frac{1}{\pi}\int_{\D{R}\setminus\croch{-\varkappa,\varkappa}}\log|k-s|\,\dd\log(1+|r(s)|^2)\notag\\
&\quad+\frac{2\varkappa}{\pi}\int_{\varkappa}^\infty\frac{\log(1+|r(s)|^2)}{s^2}\dd s+h(\varkappa)\log\frac{4}{\varkappa}\,.
\end{align}

\begin{thm}[solitonless asymptotics]   \label{thm:asymptotics}
Let $u(x,t)$ be the solution of the Cauchy problem \eqref{spe-ivp}. Assume that the spectral function $a(k)$ constructed from $u_0(x)$ has no zeros in the upper half-plane. Then the behavior of $u$ as $t\to\infty$ is described as follows. Let $\varepsilon$ be any small positive number.
\begin{enumerate}[\rm(i)]
\item
In the domain $\zeta\equiv x/t>\varepsilon$, $u(x,t)$ tends to $0$ with fast decay.
\item
In the domain $\zeta\equiv x/t<-\varepsilon$, $u(x,t)$ exhibits decaying (of the order $\ord(t^{-1/2})$) modulated oscillations given by \eqref{u-ass}, where $h(\varkappa)$ and $\phi_0(\varkappa)$ are functions of $\varkappa=1/(2\sqrt{\abs{\zeta}})$ given in terms of the associated reflection coefficient $r(k)$; in particular, by \eqref{h} and \eqref{phi0} 
\begin{equation}\label{h-vark}
h(\varkappa)=-\frac{1}{2\pi}\log\bigl(1+\abs{r(\varkappa)}^2\bigr).
\end{equation}
\end{enumerate}
\end{thm}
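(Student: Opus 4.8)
Since Theorem~\ref{thm:main} recovers the solution through $u(x,t)=\hat u(\hat x(x,t),t)$, with $f_1,f_2$ read off from the $k\to 0$ expansion of $\hat M(\hat x,t,k)$ via \eqref{rep}, the plan is to determine, for each fixed $\hat\zeta=\hat x/t$, the large-$t$ asymptotics of $\hat M$ \emph{together with its first-order $k$-expansion at $k=0$}, by applying the Deift--Zhou nonlinear steepest descent method to the RH problem \eqref{jump-y}--\eqref{M-norm} with jump written as in \eqref{jump-theta}. Under the solitonless hypothesis $a(k)\neq 0$ for $\Im k\geq 0$ there are no residue conditions, so the analysis is governed entirely by the signature table of $\Im\theta(\hat\zeta,k)=\Im k\cdot(\hat\zeta+\tfrac{1}{4\abs{k}^2})$. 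I would carry out the deformations for the two qualitatively distinct ranges of $\hat\zeta$ and, at the end, translate the results from $\hat\zeta$ back to the physical ray $\zeta=x/t$.

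For $\hat\zeta>\varepsilon$ the real axis is the only zero set of $\Im\theta$, with $\pm\Im\theta>0$ in $\pm\Im k>0$. I would factor the jump as in \eqref{factor1} and deform the contour off $\D{R}$ into thin strips, so that the triangular factors carry $\eul^{\pm 2\ii t\theta}$ with the favorable sign of $\Im\theta$ and the deformed jump matrix tends to $I$ exponentially in $t$. Since there are no poles, the residual RH problem is holomorphic with exponentially small jump; hence its solution, and in particular $\hat M(\hat x,t,0)$ and $\partial_k\hat M(\hat x,t,0)$, approach those of the trivial problem exponentially fast. By \eqref{rep} this forces both $\hat u$ and $x-\hat x$ to decay fast, which gives (i); the asymptotic coincidence of $\{\hat\zeta>\varepsilon\}$ with $\{\zeta>\varepsilon\}$ follows from $x-\hat x\to 0$.

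For $\hat\zeta<-\varepsilon$ the signature changes at the stationary points $\pm\hat\kappa$, $\hat\kappa=\tfrac{1}{2\sqrt{\abs{\hat\zeta}}}$, so I would use \eqref{factor1} on $(-\hat\kappa,\hat\kappa)$ and the reversed factorization \eqref{factor2} on its complement. The diagonal middle factor of \eqref{factor2} is removed by conjugating with $\delta(k;\hat\zeta)^{\sigma_3}$ from \eqref{delta}, and the triangular factors are absorbed through the deformation \eqref{mu2}, concentrating the jump near $\pm\hat\kappa$. Using the local expansions \eqref{delta1}, \eqref{theta1} and the rescaling \eqref{k-hat}, each neighborhood maps to the standard parabolic-cylinder model problem on a cross, whose large-argument behavior is \eqref{mod-as}--\eqref{be-x}. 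Matching as in \cite{L-lr} produces \eqref{M-hat-exp}--\eqref{m0-exp}; composing back with $\delta^{\sigma_3}$ and using $r(k)=\ord(k^2)$ near $k=0$, I would extract the $k\to 0$ expansion of $\hat M$, identify $f_1,f_2$ through \eqref{rep}, and obtain $\hat u=c_2 t^{-1/2}(1+\osmall(1))$ and $x-\hat x=Q(1+\osmall(1))$. Substituting $\hat\kappa$ and converting $\hat\zeta$ to $\zeta$ then yields the modulated oscillations \eqref{u-ass}, proving (ii).

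The main obstacle is the tension between \emph{where} $u$ is read off and \emph{where} the asymptotics is produced: Theorem~\ref{thm:main} recovers $u$ from $\hat M$ at $k=0$, while the steepest-descent contribution localizes at $k=\pm\hat\kappa$, which stays bounded away from $0$. The delicate step is therefore to control the $\delta$-factor and the model contribution \emph{uniformly down to} $k=0$, extracting not merely $\hat M(\hat x,t,0)$ but also its $\ord(k)$ coefficient with an error of order $\ord(k^2 t^{-1/2-\varepsilon})$ as recorded in \eqref{M-hat-exp}. A secondary technical burden is to show that the successive deformations, the $\delta$-conjugation, and the parametrix gluing leave a small-norm residual RH problem with errors uniform in $\hat\zeta$ on compact subsets of $(-\infty,-\varepsilon)$; this rests on the Fourier-type decay of $r$ granted by Remark~\ref{rem:nls2} together with the standard $L^2$/$L^\infty$ estimates for the resulting singular integral equation.
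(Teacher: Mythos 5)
Your proposal follows essentially the same route as the paper's own argument: the same signature-table analysis of $\theta(\hat\zeta,k)=\hat\zeta k-\frac{1}{4k}$, the same factorizations \eqref{factor1} and \eqref{factor2} with the $\delta$-conjugation \eqref{delta} and absorption \eqref{mu2}, the same parabolic-cylinder local model at $k=\pm\hat\kappa$ matched via \cite{L-lr}, and the same final step of reading off the $\ord(k)$ coefficient of $\hat M$ at $k=0$ through \eqref{rep} to get $\hat u=c_2t^{-1/2}(1+\osmall(1))$ and $x-\hat x=Q(1+\osmall(1))$. You also correctly isolate the one point specific to this equation --- controlling the expansion down to $k=0$ with error $\ord(k^2t^{-1/2-\varepsilon})$ as in \eqref{M-hat-exp} --- so nothing essential is missing.
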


For completeness, we present also the asymptotics in the soliton case.

\begin{thm}[soliton asymptotics]   \label{thm:asymptotics-sol}
Assume that $a(k)$ has $N=2n+m$ simple zeros
\[
\{k_j\}_1^n\cup \{-\bar k_j\}_1^n\cup \{i\nu_j\}_1^m,
\]
where $\mu_j\coloneqq\Re k_j>0$, $\nu_j\coloneqq\Im k_j>0$. Assume also that if $\mu_j^2+\nu_j^2\neq\mu_l^2+\nu_l^2$ if $j\neq l$. Then:
\begin{enumerate}[\rm(i)]
\item 
For $\varepsilon>0$ sufficiently small, the asymptotics of $u$ in each sector $\left|\frac{x}{t}+\frac{1}{4(\mu_j^2+\nu_j^2)}\right|<\varepsilon$ is given by
\[
u(x,t)=u_j(x,t)+\ord(t^{-1/2}),
\]
where $u_j$ is given, parametrically, as follows:
\begin{enumerate}[\rm a)]
\item 
If $\mu_j=0$, then $u_j$ is given by \eqref{sol-1-1} with $\nu$ replaced by $\nu_j$ and $\phi$ replaced by 
\[
\phi_j = 2\nu_j \hat x +\frac{t}{2\nu_j} + \phi_j^0.
\]
If $\mu_j\neq 0$, then $u_j$ is given by \eqref{sol-2}, with $\mu$, $\nu$, $\phi$, and $\psi$ replaced respectively by 
$\mu_j$, $\nu_j$,
\[
\phi_j = 2\nu_j \hat x +\frac{\nu_j t}{2 (\mu_j^2+\nu_j^2)} + \phi_j^0,\qquad
\psi_j = 2\mu_j \hat x -\frac{\mu_j t}{2 (\mu_j^2+\nu_j^2)} + \psi_j^0.
\]
\end{enumerate}
Here $\phi_j^0 $ and $\psi_j^0 $ are constants determined by the 
scattering data $\{r(k),k\in\D{R}; \{k_j,\gamma_j\}_{1}^N\}$.
\item 
Outside these sectors, $u(x,t)=\ord(t^{-1/2})$.
\end{enumerate}
\end{thm}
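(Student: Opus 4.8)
The plan is to follow the Deift--Zhou nonlinear steepest descent scheme for a Riemann--Hilbert problem carrying \emph{both} discrete and continuous spectral data, isolating in each sector the single ``resonant'' soliton whose characteristic velocity matches that sector. Throughout I would work with the RH problem \eqref{jump-y}--\eqref{res-M-hat} in the variables $(\hat x,t)$ and set $\hat\zeta=\hat x/t$; since in any of the soliton sectors the shift $x-\hat x$ stays bounded, the sector $\abs{\zeta+\tfrac{1}{4(\mu_j^2+\nu_j^2)}}<\varepsilon$ coincides asymptotically with its counterpart in $\hat\zeta$, so it suffices to analyse the RH problem for $\hat\zeta$ close to $-\tfrac{1}{4(\mu_j^2+\nu_j^2)}$ and to transfer the result to the $x$-variable via Theorem~\ref{thm:main} at the end.

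The first step is to read off the signature of the residue conditions. By \eqref{res-M-hat} and \eqref{theta}, the exponential weight attached to the pole $k_l$ is $\eul^{2\ii(k_l\hat x-t/(4k_l))}=\eul^{2\ii t\theta(\hat\zeta,k_l)}$, whose modulus is $\eul^{-2t\Im\theta(\hat\zeta,k_l)}$ with $\Im\theta(\hat\zeta,k_l)=\nu_l\bigl(\hat\zeta+\tfrac{1}{4\abs{k_l}^2}\bigr)$. Thus the weight at $k_l$ decays (resp.\ grows) exactly when $\abs{k_l}^2<\abs{k_j}^2$ (resp.\ $>$), and it is bounded precisely along $\hat\zeta=-\tfrac{1}{4\abs{k_j}^2}$, i.e.\ at the centre of the $j$-th sector; since the $\abs{k_l}^2$ are pairwise distinct, exactly one pole pair is resonant in each sector. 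I would then split the discrete spectrum into the decaying group and the growing group: the decaying residue conditions are removed with an exponentially small error, while the growing ones are first conjugated by a Blaschke-type factor $\prod_l\bigl(\tfrac{k-k_l}{k-\bar k_l}\bigr)^{\sigma_3}$ --- compatible with the symmetry \eqref{M-sym} --- which turns the growing residues into decaying ones that are then likewise discarded.

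Next, the continuous-spectrum jump on $\D{R}$ is treated exactly as in the solitonless case (Theorem~\ref{thm:asymptotics}): the same factorisations \eqref{factor1}, \eqref{factor2}, the $\delta$-function \eqref{delta}, and the reduction to the parabolic-cylinder model problem on the two crosses at $\pm\hat\kappa$ apply verbatim and contribute an additive term of order $\ord(t^{-1/2})$. What remains after all these deformations is a RH problem with a single pole pair --- at $\ii\nu_j$ if $\mu_j=0$, or at $k_j,-\bar k_j$ if $\mu_j\neq0$ --- and trivial jump up to $\ord(t^{-1/2})$; its solution is exactly the rational $\hat M$ of \eqref{m-sol} or \eqref{m-sol-2}, so by Theorem~\ref{thm:main} the corresponding $u_j$ is given by the one-soliton formula \eqref{sol-1-1} or the breather formula \eqref{sol-2}. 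The phase constants $\phi_j^0,\psi_j^0$ are then collected from two contributions: the Blaschke factors evaluated at the resonant pole (the classical soliton--soliton shift) and the $\delta$-factor coming from the continuous spectrum.

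I expect the main obstacle to be the simultaneous management of the two conjugations --- the one converting the growing discrete residues and the scalar $\delta^{\sigma_3}$ that triangularises the continuous part --- while preserving the symmetry \eqref{M-sym} and keeping the small-norm estimates uniform across the whole sector $\abs{\hat\zeta+\tfrac{1}{4\abs{k_j}^2}}<\varepsilon$. In particular one must check that the non-resonant poles stay uniformly away from the stationary points $\pm\hat\kappa$, so that their removal does not interfere with the parabolic-cylinder model, and that the errors from the removed poles and from the steepest-descent reduction are both genuinely $\ord(t^{-1/2})$ uniformly in the sector. Once this is secured, statement (ii) follows at once, since outside all sectors no pole is resonant and only the continuous-spectrum contribution, of order $\ord(t^{-1/2})$, survives.
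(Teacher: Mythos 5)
Your outline is correct and coincides with the route the paper itself relies on: the paper states Theorem~\ref{thm:asymptotics-sol} without a written proof, but its Section~\ref{sec:5} makes clear that the intended argument is exactly the Deift--Zhou scheme of the solitonless case augmented by the standard (focusing-NLS-type) treatment of the residue conditions --- sorting poles by the sign of $\Im\theta(\hat\zeta,k_l)=\nu_l\bigl(\hat\zeta+\tfrac{1}{4\abs{k_l}^2}\bigr)$, removing decaying residues, Blaschke-conjugating growing ones, retaining the single resonant pair per sector, and reading off \eqref{sol-1-1} or \eqref{sol-2} via Theorem~\ref{thm:main}. The technical caveats you flag (symmetry-compatible Blaschke factors, non-resonant poles staying away from $\pm\hat\kappa$, uniformity of the small-norm estimates) are the right ones and do not point to any gap.
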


\begin{rem}
The asymptotic results presented above imply the following.
\begin{enumerate}[a)]
\item 
In the solitonless case as well as in the case when all the zeros $k_j$ of $a(k)$ are located outside the sector $\left|\frac{\Im k_j}{\Re k_j}\right|\geq\tan\frac{\pi}{8}$, there exists $T>0$ such that for all $t>T$, the solution of the Cauchy problem \eqref{spe-ivp} is a smooth classical solution (possibly after passing through wave breakings).
\item
A sufficient condition for wave breaking: If $a(k)$ has a zero $k^*$ in the sector $\left|\frac{\Im k^*}{\Re k^*}\right|\geq\tan\frac{\pi}{8}$, then wave breaking occurs at a certain finite time. 

Notice that another sufficient condition for finite time wave breaking has been obtained in \cite{LPS09} using the method of characteristics and conserved quantities.
\end{enumerate}
\end{rem}

\begin{rem}
The asymptotic formula presented in Theorem \ref{thm:asymptotics}
improves the asymptotics obtained in \cite{HN15}  and \cite{N14}
by using different methods (not relying on the integrability of the SP equation).
\end{rem}


\end{document}